\documentclass[11pt]{article}
\usepackage[a4paper,margin=25mm]{geometry}
\usepackage[T1]{fontenc}
\usepackage{lmodern,amsmath,amssymb,amsthm}
\usepackage{mathrsfs} 
\usepackage[colorlinks=true,linkcolor=blue,citecolor=blue,urlcolor=blue]{hyperref}
\newcommand{\dd}{\,\mathrm d}
\newcommand{\curl}{\nabla\times}
\newcommand{\Div}{\nabla\cdot}
\newcommand{\norm}[1]{\lVert#1\rVert}
\newcommand{\Om}{\Omega}
\newcommand{\Ax}{\mathcal A}
\newcommand{\Kern}{\mathcal K}
\newcommand{\calP}{\mathcal P}
\newcommand{\calM}{\mathcal M}

\newcommand{\calD}{\mathfrak D}

\newcommand{\eps}{\varepsilon}
\theoremstyle{plain}
\newtheorem{theorem}{Theorem}[section]
\newtheorem{proposition}[theorem]{Proposition}
\newtheorem{lemma}[theorem]{Lemma}
\theoremstyle{definition}

\newtheorem{assumption}[theorem]{Assumption}
\theoremstyle{remark}

\hypersetup{pdftitle={A compatibility--realization framework for singular Navier--Stokes flows: admissible families and non-rigid core mechanics},pdfauthor={Weishuo Liu}}
\title{A compatibility--realization framework for\\
singular Navier--Stokes flows:\\
admissible families and non-rigid core mechanics}
\author{Weishuo Liu\\[3pt]
\small School of Mechanics and Engineering Science\\
\small Peking University, Beijing 100871, China\\
\small\href{mailto:liuweishuo@pku.edu.cn}{\texttt{liuweishuo@pku.edu.cn}}}
\date{}
\begin{document}
\maketitle
\begin{abstract}
We introduce a compatibility--realization framework for constructing singular
solutions to the forced three-dimensional incompressible Navier--Stokes
equations. The framework separates exact balance
and matching constraints, expressed in a chosen geometric representation,
from the sufficient conditions of a particular completion scheme. Completed
realizations are organized by observable fibers, distinguishing prescribed
core kinematics from the surrounding force and transport mechanisms.
Conditional on the source-dependent completion input specified in this
paper, we obtain families with zero initial velocity, smooth compactly
supported forcing, bounded kinetic energy and finite-time unbounded
velocity. These include a reversal of leading viscous work at fixed axis
motion and first velocity gradient, a signed transition through a stationary
singular center under one fixed canonical pressure trace, and a connected
stretching family crossing the threshold of zero leading direct viscous
vorticity supply. Near a critical profile, the realizable trace data contain
a neighborhood in an infinite-dimensional space of analytic axial-velocity
and pressure perturbations, while the central first velocity gradient and
the critical vorticity balance remain fixed. We also identify the full
kernel and an explicit right inverse of the joint angular-transport operator
in the chosen realization class. The results establish non-rigidity within
observable fibers and show that axis kinematics alone do not determine the
local pressure--viscosity partition. Direct profile selection and dynamical
stress generation are treated as coupled components of construction, rather
than as definitions of the entire design space.
\end{abstract}

\section{Introduction}
The construction of a singular fluid motion involves more than a growing velocity field. The pressure must agree with the momentum balance, the residual force must have the prescribed regularity, and local concentration must fit the global energy and matching conditions. These requirements are especially restrictive for the three-dimensional incompressible Navier--Stokes equations with positive viscosity. Leray's finite-energy theory, partial regularity and critical-norm criteria describe different aspects of this constraint structure \cite{Leray1934,CKN1982,ESS2003}. The forced breakdown alternatives in the Clay problem statement additionally require a smooth force with specified decay properties \cite{Fefferman2000}.

Several constructive approaches expose freedoms that are not visible in a single velocity ansatz. The differential-inclusion formulation of Euler and subsequent stress-realization methods separate a mean flow from the oscillations used to complete it \cite{DLS2009,Daneri2017}. For Navier--Stokes, convex integration gives finite-energy weak nonuniqueness \cite{BuckmasterVicol2019}, while an unstable similarity background yields nonuniqueness of forced Leray solutions \cite{Albritton2022}. Tao's averaged equation provides a further example in which energy cancellation coexists with finite-time blowup \cite{Tao2016}. These results concern different equations or solution classes, but they demonstrate why an admissible residual and a realizable correction have to be considered together.

Dynamical amplification supplies another part of this picture. Short-wave analysis follows wavevectors and polarizations along a background trajectory \cite{Lifschitz1991,Friedlander1991}; exact shearing-wave solutions retain selected nonlinear cancellations \cite{Craik1986,Singh2017}. Forced Euler and hypodissipative constructions combine amplification with increasingly concentrated scales and control of the force \cite{Cordoba2023,Cordoba2026}. The recent preprint of OpenAI \cite{OpenAI2026} proposes a positive-viscosity construction in which amplified disturbances supply a momentum flux for a collapsing background vortex. Its profile-completion theorem is the external existence input used in this paper.

Here we study the freedom of the completed construction itself. Rather than
prescribing a velocity ansatz and subsequently assigning its residual to the
force, we jointly select the geometry, mean velocity, pressure, cumulative
transport and realization of the required momentum transfer. We call this
joint constrained selection \emph{co-design} and formulate it within a
\emph{compatibility--realization framework}. Exact identities constrain the
actual profile data, while a completion scheme supplies sufficient conditions
under which those data can be realized by a full singular flow. The pressure
normalization, transport operators and stress cone of the source construction
are one concrete instance of this structure, not universal restrictions on
all possible constructions.

Our results concern both the freedom of admissible data and its physical
interpretation. Theorem~\ref{thm:families} gives completed profile families that reverse the
leading viscous work at fixed axis motion and first gradient, pass through
opposite signed material motions under a common canonical pressure trace,
and connect positive, vanishing and negative leading direct viscous
contributions to central vorticity growth. Its analytic deformation result
also gives a local lifting statement: a full neighborhood of prescribed
analytic velocity and pressure traces is attained within a fiber of fixed
central kinematics. Appendix~\ref{app:angular} identifies an explicit right inverse and the
complete function-valued kernel of the angular-transport compatibility
operator. These statements distinguish genuine freedom in the singular
profile from the merely formal freedom to introduce parameters into an
ansatz.

An observation map organizes these constructions by the core information
that they share. A force or transport diagnostic is identifiable from those
observations precisely when it is constant on the corresponding fibers.
The paired constructions show that the pressure--viscosity partition is not
identifiable from axis velocity and its first gradient. Different comparisons
use different observation maps; in particular, varying the central stretching
rate does not preserve the complete central first gradient. All full-flow
existence statements are conditional on Assumption~\ref{ass:completion},
with the source-dependent interface and matching requirements specified in
Appendix~\ref{app:completion}. The conclusions require a realization for each admissible
profile, not a continuous selection of final oscillatory fields, and the
forcing is allowed to vary between realizations.

\section{The compatibility--realization framework}
\label{sec:framework}
Consider
\begin{equation}
 \partial_tu+(u\cdot\nabla)u-\nu\Delta u+\nabla p=f,
 \qquad\Div u=0,\qquad u(\cdot,0)=0,
 \label{eq:NS}
\end{equation}
on $\mathbb R^3\times[0,1)$, with fixed $\nu>0$. The target class has
\begin{equation}
 f\in C_c^\infty(\mathbb R^3\times(0,\infty)),\qquad
 \sup_{t<1}\|u(t)\|_2<\infty,\qquad
 \limsup_{t\uparrow1}\|u(t)\|_\infty=\infty.
 \label{eq:target}
\end{equation}
Smoothness of $f$ includes extension through $t=1$. The explicit profiles
below use viscosity-one coordinates; the rescaling in Appendix~\ref{app:axis} gives every
positive viscosity.

\subsection{Compatibility, realization and completion}
\label{subsec:compatibility}
Fix a geometric representation $\mathfrak g$, and let $\mathscr X_{\mathfrak g}$
be its space of profile data. A design $\mathbf d$ records actual mean
velocity and pressure fields, cumulative transports, and any residual
stress used in the construction. Its exact compatibility locus is
\begin{equation}
 \mathcal C_{\mathfrak g}
 =\{\mathbf d\in\mathscr X_{\mathfrak g}:
       \mathcal E_{\mathfrak g}(\mathbf d)=0\}.
 \label{eq:compatibility}
\end{equation}
The equations $\mathcal E_{\mathfrak g}=0$ include incompressibility and the
pressure, transport and matching identities appropriate to that
representation. They are independent of a particular recipe for realizing
the stress, but not independent of the chosen geometric representation.
In particular, the canonical pressure formula used below is an identity of
our anisotropic profile class, not a pressure formula asserted for arbitrary
Navier--Stokes flows.

The coupling to realization can be seen directly in the momentum residual
\begin{equation}
 \mathcal R_\nu(v,q)
 =\partial_tv+\nabla\cdot(v\otimes v)-\nu\Delta v+\nabla q.
 \label{eq:residual}
\end{equation}
If a mean profile satisfies
$\mathcal R_\nu(\bar u,\bar p)=\nabla\cdot R+f_{\rm rem}$, a correction
$(w,\pi)$ produces
\begin{align}
 \mathcal R_\nu(\bar u+w,\bar p+\pi)
 &=f_{\rm rem}+\nabla\cdot R
    +\mathcal K_\nu[\bar u;w,\pi],
 \label{eq:coupled-residual}\\
 \mathcal K_\nu[\bar u;w,\pi]
 &=\partial_tw-\nu\Delta w+\nabla\pi
   +\nabla\cdot(\bar u\otimes w+w\otimes\bar u+w\otimes w).
 \nonumber
\end{align}
Thus a stress decomposition is not itself a realization. Actual corrections
must cancel the nonsmooth part of the residual while their interactions,
pressure and localization errors satisfy the same smooth-force requirement.
The final velocity must also retain the initial condition, bounded energy
and blowup in \eqref{eq:target}.

Let $\rho$ label a completion scheme and let $\xi$ denote its realization
variables. Write its joint feasible set as
\begin{equation}
 \begin{split}
 \mathfrak Z_\rho=\bigl\{(\mathbf d,\xi):\;&
 \mathbf d\in\mathcal C_{\mathfrak g(\rho)},\quad
 \mathfrak F_\rho(\mathbf d,\xi)=0,\\
 &\mathcal I_\rho(\mathbf d,\xi)>0,\quad
 B_{\rho,k}(\mathbf d,\xi)<\infty\quad(k\geq0)\bigr\},
 \qquad
 \calD_\rho=\operatorname{pr}_{\mathbf d}\mathfrak Z_\rho.
 \end{split}
 \label{eq:designset}
\end{equation}
Here $\mathfrak F_\rho$ records stress production and the remaining
higher-order closure and matching equations. The inequalities are
scheme-dependent realizability and nondegeneracy conditions; at a flat
stress edge they concern the normalized direction. The bounds record the
regularity, support and smooth-extension estimates required by that scheme.
These symbols organize the obligations of a construction; they do not
supply a new completion theorem.

For a scheme with a valid completion statement, define the correspondence
\begin{equation}
 \mathbf{Comp}_\rho(\mathbf d)
 =\{(u,p,f): (u,p,f)\text{ is a full realization of }\mathbf d
                  \text{ through }\rho\}.
 \label{eq:completion-correspondence}
\end{equation}
Its completion statement asserts
$\mathbf{Comp}_\rho(\mathbf d)\neq\varnothing$ on the specified admissible
set. Neither uniqueness nor continuous selection is part of this definition.
The collection $\{\calD_\rho,\mathbf{Comp}_\rho\}_{\rho\in\mathfrak R}$
allows different realizations to be compared without identifying one
scheme's stress cone with the whole design problem. Only the reference
scheme $\rho_0$ is used for the existence results here, conditionally on
Assumption~\ref{ass:completion}.

\subsection{Observable fibers and admissible lifting}
\label{subsec:fibers}
Let $\mathscr S_{\rm real}$ denote the full solutions obtained through the
specified completion schemes. For an observation map
$\mathcal O:\mathscr S_{\rm real}\to\mathcal Y$, define
\begin{equation}
 \mathscr S_o=\{s\in\mathscr S_{\rm real}:\mathcal O(s)=o\}.
 \label{eq:solution-fiber}
\end{equation}
Observations may retain late axis trajectories, velocity gradients or
selected growth coefficients. A diagnostic
$\mathcal Q:\mathscr S_{\rm real}\to\mathcal Z$ is identifiable from
$\mathcal O$ when it is constant on every nonempty observation fiber.
Equivalently,
\begin{equation}
 \mathcal Q=\widehat{\mathcal Q}\circ\mathcal O
 \quad\text{for a map }\widehat{\mathcal Q}
 \text{ defined on }\mathcal O(\mathscr S_{\rm real}).
 \label{eq:identifiability}
\end{equation}
This is a set-theoretic notion; no continuity of
$\widehat{\mathcal Q}$ is implied. A pair with the same observations and
different diagnostic values disproves identifiability for that observation
map.

When the completion preserves the chosen observables, they descend to the
profile data and define
\begin{equation}
 \calD_{\rho,o}
 =\{\mathbf d\in\calD_\rho:\mathcal O(\mathbf d)=o\}.
 \label{eq:fiber}
\end{equation}
Given a trace map $\mathcal T$ and a reference member $\mathbf d_0$, an
\emph{admissible lifting} of prescribed trace data $a$ is a profile
$\mathbf d\in\calD_{\rho,o}$ with $\mathcal T(\mathbf d)=a$.
A local lifting result therefore asserts that
$\mathcal T(\calD_{\rho,o})$ contains a neighborhood of
$\mathcal T(\mathbf d_0)$ in the stated trace space. This is stronger than
listing formal parameters: the exact constraints, realization conditions
and completion estimates have to be satisfied for every admitted trace.

For a fixed scheme and a differentiable function-space formulation, set
$x=(\mathbf d,\xi)$ and
$\mathcal H_\rho(x)=(\mathcal E_{\mathfrak g}(\mathbf d),
\mathfrak F_\rho(\mathbf d,\xi))$. The equality-preserving linearized
directions at $x_0$ are
\begin{equation}
 \mathcal V^{\rm lin}_{\rho,x_0,o}
 =\ker D\mathcal H_\rho(x_0)\cap\ker D\mathcal O_\rho(x_0),
 \label{eq:linearized-directions}
\end{equation}
where $\mathcal O_\rho$ is the observation written in the same variables.
This notation is used only when the indicated derivatives exist in the
chosen spaces. It does not identify these directions with an actual tangent
space or assert that they integrate to completed families. Nonlinear
lifting, the realizability inequalities and all required bounds remain
separate obligations.

Theorem~\ref{thm:families}(\ref{item:family-partition}) gives two completed realizations with identical full axis
kinematics and first gradient but different pressure--viscosity partitions.
Theorem~\ref{thm:families}(\ref{item:family-analytic}) gives a nonlinear lifting result in an infinite-dimensional
space of analytic traces at fixed central first gradient. The
stretching-rate family uses a weaker observation, namely the common central
vorticity trace at fixed $h$ and center swirl coefficient; its complete
first gradients vary with the stretching rate.

\subsection{An explicit anisotropic realization class}
\label{subsec:explicit}
Fix a small $h>0$ and set
\begin{equation}
 \begin{gathered}
 A=\tfrac12+h,\qquad D=\tfrac12-h,\qquad
 d=1-\eta^2,\qquad L=1-2h\eta^2,\\
 1-t=qd,\qquad z=q^D\eta,\qquad X=\frac{r^2}{2q}.
 \end{gathered}
 \label{eq:coordinates}
\end{equation}
The scalar $d$ in these coordinates is distinct from the design
$\mathbf d$. The leading azimuthal and axial velocities are $q^{-A}E(X,\eta)$
and $q^{-A}U(X,\eta)$. Set $H=\sqrt{2X}E$. Their actual radial integrals and
canonical pressure are
\begin{align}
 M&=\int_0^X U\,dx,&
 I&=\int_0^X H\,dx,&
 J&=\int_0^X UH\,dx,&
 S&=\int_0^X(U^2-E^2/2)\,dx,
 \label{eq:moments}\\
 \Pi(X,\eta)&=-\int_X^\infty\frac{E(x,\eta)^2}{2x}\,dx.
 \label{eq:pressure}
\end{align}
Thus a change in canonical pressure requires a change in the actual swirl
profile. Introduce $y=\log X$ and
\[
 \mu=M/X,\qquad \iota=I/(X\sqrt{2X}),\qquad
 \jmath=J/(X\sqrt{2X}),\qquad \sigma=S/X.
\]
With the prescribed shears $a=1-2E_y/E$ and $b=2U_y/E$, the exact forward
system is
\begin{equation}
 \begin{aligned}
 E_y&=(1-a)E/2,& U_y&=bE/2,\\
 \mu_y&=U-\mu,& \iota_y&=E-3\iota/2,\\
 \jmath_y&=UE-3\jmath/2,& \sigma_y&=U^2-E^2/2-\sigma,\\
 \Pi_y&=E^2/2.
 \end{aligned}
 \label{eq:forward}
\end{equation}
These equations contain no parameter derivatives until a particular
feedback law is imposed. The actual residual sources do contain those
derivatives:
\begin{align}
 W&=1-2D\eta\mu-d\mu_\eta,\nonumber\\
 Q&=-W+\frac{(1-h)\iota-D\eta\iota_\eta-d\jmath_\eta
                   +2(h-D)\eta\jmath}{E},
 \label{eq:Q}\\
 N&=-WU+D(\mu-\eta\mu_\eta)+4h\eta\sigma-d\sigma_\eta
                         +4A\eta\Pi-d\Pi_\eta.
 \label{eq:N}
\end{align}
Incompressibility fixes the radial velocity through
\begin{equation}
 r u_r^{(0)}=\frac{X}{L}
       (2\eta U-2D\eta\mu-d\mu_\eta).
 \label{eq:radial}
\end{equation}
The pressure, radial flow and residual therefore cannot be reset between
design stages.

For the reference pulse realization of \cite{OpenAI2026}, set
\begin{equation}
 \begin{gathered}
 p_1=XQ/L,\qquad p_2=XN/(LE),\qquad
 \vartheta=-b/a,\qquad v=a(1+\vartheta^2),\\
 P_c=p_1+\vartheta p_2,\qquad J_c=p_2-\vartheta p_1.
 \end{gathered}
 \label{eq:cone-coordinates}
\end{equation}
On the active stress annulus its sufficient cone is
\begin{equation}
 a>0,\qquad v>2,\qquad P_c>v,\qquad
 (v-2)J_c^2<2(P_c-v)^2.
 \label{eq:cone}
\end{equation}
The actual tangential stress is
\begin{equation}
 T_0=\frac{E}{\sqrt{2X}}\bigl((p_1,p_2)-(a,-b)\bigr).
 \label{eq:stress}
\end{equation}
The inner core satisfies $T_0=0$ and is regular at the axis. The cone
coordinates are used on the active annulus; a sign requirement there need
not hold at every radius of the stress-free core.

The specified terminal heat profile requires $M(\infty)=S(\infty)=0$ and
a joint angular-transport condition. With
$H_{\rm pow}=\sqrt2c_\infty X^{-h}$, define
\begin{equation}
 I_0=\int_0^\infty(H-H_{\rm pow})\,dX,\qquad
 J_0=\int_0^\infty UH\,dX.
 \label{eq:angular-moments}
\end{equation}
The compatibility equation is
\begin{equation}
 \mathcal A_0(I_0,J_0)
 :=D\eta I_0'-(1-h)I_0+dJ_0'+(1-4h)\eta J_0=0.
 \label{eq:angular}
\end{equation}
It is a constraint on the pair, not two independent zero-moment
requirements. It admits a full function of freedom:
\begin{equation}
 (I_0,J_0)=\left(\frac{d\phi'+2k_0\eta\phi}{L},
                  \frac{k_0\phi-D\eta\phi'}{L}\right),
 \qquad k_0=\tfrac32-2h.
 \label{eq:angularpotential}
\end{equation}
More generally, Appendix~\ref{app:angular} gives
\begin{equation}
 \begin{aligned}
 \mathcal R_0\chi
 &=\frac{\chi}{k_0L}(-d,D\eta),\\
 \mathcal K_0\varphi
 &=\frac1L(d\varphi'+2k_0\eta\varphi,
                    k_0\varphi-D\eta\varphi'),\\
 \mathcal A_0\mathcal R_0\chi&=\chi,\qquad
 \ker\mathcal A_0=\mathcal K_0 C^\infty([-1,1]).
 \end{aligned}
 \label{eq:angular-splitting}
\end{equation}
Hence all smooth solutions of $\mathcal A_0(I,J)=\chi$ have the form
$(I,J)=\mathcal R_0\chi+\mathcal K_0\varphi$. This explicit decomposition
separates a compulsory compatibility correction from function-valued
transport freedom. It is an algebraic statement: an arbitrary kernel
choice is not automatically compatible with the pressure, stress cone and
completion estimates. Appendix~\ref{app:angular} verifies the additional small nonzero
angular-budget family used here.

At the finite matching stage, Appendix~\ref{app:completion} restores simultaneous moment
changes using a full-rank moment differential and its right inverse,
\begin{equation}
 B(\eta)c+Q_\eta(c,c)=\delta m(\eta),\qquad
 \mathcal R_B(\eta)=B(\eta)^T[B(\eta)B(\eta)^T]^{-1}.
 \label{eq:moment-lifting}
\end{equation}
The uniform rank bound, the parameter norms, analytic core reconstruction
and normalized edge margins are part of that argument. These concrete
lifting operations, rather than the abstract notation alone, support the
profile deformations below.

\begin{assumption}[Completion input]
\label{ass:completion}
We take as an external input the validity, in their stated form, of the
analytic-core estimates, stress-realization and whole-space completion results of \cite[Theorem~4.6, Sections~5--10 and Appendices~A--C]{OpenAI2026} through the interface specified in Appendix~\ref{app:completion}. A leading
profile with the required regular axis, flat stress edges, realizable
annular stress, exact exterior matching and ordered correction intervals,
including the stated moment solvability and parameter bounds, admits a full
solution satisfying \eqref{eq:NS}--\eqref{eq:target}. The completion
preserves the prescribed velocity and pressure traces and the first
velocity gradient on the axis in a late local neighborhood. The force and
its derivatives are flat there at the singular time.
\end{assumption}
All full-flow existence statements in this paper are under this assumption.
The displayed profile, transport and axis identities are direct
calculations. The added reconstruction arguments are given in
Sections~\ref{subsec:app-core-contraction}--\ref{subsec:app-partition-reconstruction},
Lemma~\ref{lem:prepared-rank}, and Proposition~\ref{prop:angular-completion}.
They verify the changed data before the source completion is invoked.

\section{Realized families and their independent freedoms}\label{sec:families}
Write the prescribed axial trace as $G(\eta)=U(0,\eta)$ and the swirl normalization as $F_*(\eta)=\lim_{X\downarrow0}E/\sqrt{2X}$. The following results concern actual globally matched profiles. Their proofs reconstruct the equalities as well as preserving the strict inequalities.

\begin{theorem}[Families of completed singular profiles]\label{thm:families}
Under Assumption~\ref{ass:completion}, the following designs admit full Navier--Stokes realizations satisfying \eqref{eq:target}.
\begin{enumerate}
\renewcommand{\labelenumi}{(\roman{enumi})}
\renewcommand{\theenumi}{\roman{enumi}}
\item\label{item:family-partition} For $G=4\eta+j_0$ with a sufficiently small $j_0>0$, a reference design and a pressure-modified design can have the same $G,F_*$ and the same complete first velocity gradient throughout the prescribed late axis region. On their common distinguished material particle, the reference viscosity performs negative leading work, whereas the modified pressure can satisfy
\begin{equation}
 -p_z=-\beta a_H,\qquad \nu\Delta u_z+f_z=(1+\beta)a_H,
 \qquad\beta\ge0,
 \label{eq:partition}
\end{equation}
for any prescribed finite $\beta$, where $a_H>0$ is the common material acceleration. Common axis data can be selected for any fixed compact range of such $\beta$, together with the reference member.
\item\label{item:family-offset} There are a fixed nonzero tilt $\gamma>0$ and $j_*>0$ such that one canonical axis pressure trace $\Pi_\gamma=\Pi_{\rm ref}+\gamma\eta$ supports every $G_j=4\eta+j$, $|j|\le j_*$. The distinguished axis particle has upward acceleration for $j>0$, is stationary for $j=0$, and has downward acceleration for $j<0$.
\item\label{item:family-stretching} There are $0<h_-<h_+<1/100$ and $\Delta_->0$ for which every
\begin{equation}
 h\in[h_-,h_+],\qquad 1+h-\Delta_-\le m\le4
 \label{eq:mfamily}
\end{equation}
has a realization with $G=m\eta$. At fixed $h$ and nonzero pressure tilt, these profiles can share the canonical axis pressure and the center swirl coefficient $C^{-1}$. The interval can be chosen to include $m=1$.
\item\label{item:family-analytic} Around a fixed completed critical member $G_0=(1+h)\eta$, there are a bounded common complex neighborhood $\Om$ and $\eps_F>0$ such that
\begin{equation}
 G=G_0+g,\quad\Pi_{\rm ax}=\Pi_{{\rm ax},0}+\psi,\quad F_*=F_{*,0},
 \qquad \norm g_\Om+\norm\psi_\Om<\eps_F,
 \quad g(0)=g'(0)=0
 \label{eq:functionfamily}
\end{equation}
is realizable for bounded holomorphic $g,\psi$ that are real on the real interval.
\end{enumerate}
The parameter families refer to admissible profiles, each with a full realization. Their forces may differ; a continuous selection of the final oscillatory fields is not needed for these assertions.
\end{theorem}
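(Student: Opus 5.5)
The overall strategy is to reduce each item to the construction of a leading profile that meets the hypotheses listed in Assumption~\ref{ass:completion}, and then to invoke that completion input once per profile. Concretely, for each admissible datum we must produce:
\begin{itemize}
\item a regular axis core with $T_0=0$;
\item flat stress edges;
\item an active annulus on which the cone \eqref{eq:cone} holds strictly;
\item exact exterior matching $M(\infty)=S(\infty)=0$ together with \eqref{eq:angular};
\item the moment solvability \eqref{eq:moment-lifting} with uniform parameter bounds.
\end{itemize}
Since the assumption preserves the velocity and pressure traces and the first axis gradient in a late neighborhood, every statement about $G$, $F_*$, $\Pi_{\rm ax}$ and the axis partition transfers from the profile to the full solution. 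The forces are allowed to differ, and no continuous selection is needed.

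The common engine is a perturbation argument around one reference member $\mathbf d_0\in\calD_{\rho_0}$. I will take the reference pulse design of \cite{OpenAI2026} with $G=4\eta$ and small $h$, and deform it as follows.
\begin{enumerate}
\item Prescribe the shears $(a,b)$ on $y\in\mathbb R$ so that the forward system \eqref{eq:forward} reproduces the target axial trace $G$ and the target $F_*$. These traces are the $X\downarrow0$ limits of $U$ and of $E/\sqrt{2X}$.
\item Solve the analytic core reconstruction by a contraction in a weighted space of functions holomorphic in $\eta$ on $\Om$.
\item Restore the finitely many matching moments using the right inverse $\mathcal R_B$ of the full-rank differential $B$ (Lemma~\ref{lem:prepared-rank}).
\item Restore the angular condition with $\mathcal R_0$ from \eqref{eq:angular-splitting}, keeping a kernel element $\mathcal K_0\varphi$ free (Proposition~\ref{prop:angular-completion}).
\end{enumerate}
Because the cone inequalities and the edge margins are strict and open, they survive any perturbation that is small in the relevant norms. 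Small perturbations of the data are therefore admissible, and the only real work is to show that the equalities can be reinstated with bounds.

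Item~(\ref{item:family-analytic}) is exactly this scheme read as a nonlinear lifting. The data $(g,\psi)$ enter as forcing in a fixed-point map. The conditions $g(0)=g'(0)=0$ and fixed $F_*$ keep the central first gradient and the critical vorticity balance untouched, so the map is a contraction for $\eps_F$ small. Items (\ref{item:family-offset}) and (\ref{item:family-stretching}) are one-parameter cases of the same argument, with $j$ or $m$ varying in a compact set. For the stretching family, openness in $m$ combined with compactness of $[h_-,h_+]\times[1+h-\Delta_-,4]$ gives a uniform neighborhood. The claim $m=1\in$ interval then follows from $1+h-\Delta_-<1$ once $\Delta_->h_+$.

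The sign statements in (\ref{item:family-offset}) and (\ref{item:family-partition}) are axis computations. On the axis the material acceleration of the distinguished particle is determined by $G$ and its time scaling, and therefore has the sign of $j$. The pressure tilt $\gamma$ then shifts only $-p_z$, not $G$. For (\ref{item:family-partition}), I fix the axis kinematics and use the kernel/pressure freedom to reassign $-p_z$ as $-\beta a_H$. The axial momentum balance on the axis then forces $\nu\Delta u_z+f_z=(1+\beta)a_H$. Compactness in $\beta$ gives common axis data for a compact range of $\beta$, with the reference member (whose viscous work is negative) included.

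The main obstacle is the nonperturbative size of the pressure modification in (\ref{item:family-partition}): an arbitrary finite $\beta$ is not a small deformation. It requires a genuinely different swirl profile, since $\Pi$ is tied to $E$ by \eqref{eq:pressure}, and this profile must still satisfy the cone and the matching. I would first try to realize the change of $\Pi$ near the axis through $E$ in the stress-free core while keeping $F_*$ fixed. The compensating momentum would be pushed into the annulus through $p_1,p_2$, using $j_0>0$ small to keep $P_c>v$ with margin. If a uniform margin fails for large $\beta$, the fallback is continuation in $\beta$ along a compact path, re-solving the moment equations at each step via $\mathcal R_B$ with its uniform rank bound.
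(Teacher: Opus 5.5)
Your proposal has a genuine gap at item (iii), and it misdiagnoses item (i).

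\textbf{Item (iii).} The stretching family is not a small deformation of the slope-four reference. Here $m$ runs from $4$ down through the critical value $1+h$ into the subcritical range. ``Openness in $m$ plus compactness'' therefore presupposes an admissible profile at every $m$, which is precisely what has to be constructed. Worse, the margins degenerate on the way. Since $\chi=-H_*\zeta/L$ vanishes at the transport root $\eta=0$ of $H_m=\eta(D+md)$, the axis angular source there is $S_q(0,0)=\delta=m-1-h$. This tends to zero at criticality and is negative below it; the core then has negative inner shear, $a(X,0)=\delta X/2+O(X^2)$. Persistence of strict inequalities cannot carry you across $m=1+h$.

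The missing idea is to retain one more radial order. The axial correction $u_0=-YZ_*/(2L)$ gives the increment $b_1(0)=J/4$ with $J=-Z'(0)\asymp K$. This produces a term $c_2KX$ in $S_q$ and $c_5KX^2$ in $p_1$, which beats the adverse $\delta_-$ at the exit $X_0=4/\Lambda$ provided $\delta_-\le\eps_-K/\Lambda$. That is where $\Delta_-=\eps_-K/\Lambda$ comes from. The increment must be carried through the continuation ($G_r=m\eta+4a_1/\Lambda+\dots$) rather than replaced by $m\eta$. The nonzero tilt is structural, not cosmetic: $Z(0)=-\gamma\ne0$ is what supplies the exit where $\chi$ is not close to one. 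Sharing $C^{-1}$ and the pressure across $m$ then needs the shape adjustment at fixed normalization and the steering back to the slope-four exterior with the actual moving zero of $N$.

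Finally, in your sketch $\Delta_-$ is extracted by compactness over a set that already contains $[h_-,h_+]$ and $\Delta_-$, so nothing guarantees $\Delta_->h_+$. The paper fixes $\Delta_-$ before choosing $h$. This is possible only because the pressure-preparation bounds are uniform as $h\downarrow0$.

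\textbf{Item (i).} The pressure change is not nonperturbative. It is the linear tilt $c_\beta\eta$ with $c_\beta=(\calP_H+\beta\calM_H)/(d_H-4A\eta_H^2)$. Since $\eta_H=O(j_0)$ and $\Pi_0$ is even, $c_\beta=O(j_0(K+\beta_{\max}))$; that is exactly why $j_0$ is chosen after $\beta_{\max}$. The tilt is realized by the three-bump swirl edit of Lemma~\ref{lem:prepared-rank} on a $U=0$ exterior patch. It is not obtained by reshaping $E$ in the stress-free core, where $E$ is the output of the core solve with $\Pi_{\rm ax}$ as input. Nor does it come from the angular kernel, whose realization has zero $C_p$ increment.

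The real obstruction is that this small tilt is as large as the exit margin and reverses its sign. The reference has $Z_*(\eta_H)>0$, while the modified members have $Z_{*,\beta}(\eta_H)=-(1+\beta)\calM_H<0$. Because $\chi(\eta_H)=0$, the exit at the root rests on $|Z_*(\eta_H)|$ staying away from zero. So neither your perturbation around the reference nor a continuation in $\beta$ that contains the reference works; the latter must cross $\beta=-1$, i.e.\ $Z_*(\eta_H)=0$. The paper instead reconstructs the reference and the compact $\beta$-family separately, with a common regularizer and then common $\Lambda,C$. Since $\zeta$ and $\varphi_*$ depend only on $H_*$, hence on $G$, this gives the same $G$, $F_*$ and first axis gradient. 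The core and continuation estimates use only $|Z_*|$, not its sign.

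\textbf{Smaller points on (iv).} The contraction comes from the Neumann perturbation of $(1+\mathcal T_0)^{-1}$ in the coefficient norm, after a Cauchy embedding of $H^\infty(\Om)$. It does not come from $g(0)=g'(0)=0$, which only fixes the central kinematics. Besides the moment equalities, you must reinstate the exact mean-shear equality of the radial modulation, by translating the loop by $S-S_0$. At the flat edges, openness applies to the normalized direction $B_0$ in $T_0=e_aB_0$, not to the vanishing raw stress.
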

\begin{proof}
Proposition~\ref{prop:critical-connection} proves
part~\ref{item:family-stretching} by retaining the radial strain increment
through the critical connection. Section~\ref{subsec:app-local-lifting}
proves part~\ref{item:family-analytic} by an actual parameter-dependent
core solve, moment restoration and mean-preserving stress modulation.
Section~\ref{subsec:app-partition-reconstruction} proves
parts~\ref{item:family-partition} and~\ref{item:family-offset}, without
interpolating through a degenerate exit. The rank and parameter inverses
used in all these steps are supplied by Lemma~\ref{lem:prepared-rank}
and \eqref{eq:app-analytic-moment-lift}. Appendix~\ref{app:completion}
then identifies the external full-flow completion applied to each member.
\end{proof}

\paragraph{Infinite-dimensional lifting within a central-kinematics fiber.}
Theorem~\ref{thm:families}(\ref{item:family-analytic}) can be expressed as a local surjectivity statement for
prescribed traces. Fix its critical reference profile $\mathbf d_0$, the
parameters $h$ and $F_*$, and let $\mathcal O_c$ record the late central
velocity and first velocity gradient. Write $o_c=\mathcal O_c(\mathbf d_0)$
and define the real Banach space
\begin{equation}
 \begin{split}
 \mathscr H_\Omega^0=\{(g,\psi)\in
 H^\infty(\Omega)\times H^\infty(\Omega):\;&
 g,\psi\text{ are real on }[-1,1],\\
 &g(0)=g'(0)=0\},
 \qquad
 \|(g,\psi)\|=\|g\|_\Omega+\|\psi\|_\Omega.
 \end{split}
 \label{eq:trace-space}
\end{equation}
For $\mathcal T_0(\mathbf d)=(G-G_0,
\Pi_{\rm ax}-\Pi_{{\rm ax},0})$, that theorem gives
\begin{equation}
 B_{\eps_F}(0;\mathscr H_\Omega^0)
 \subseteq
 \mathcal T_0\bigl(\calD_{\rho_0,o_c}\bigr),
 \label{eq:trace-surjectivity}
\end{equation}
where the admissible fiber is restricted to the fixed $h$ and $F_*$.
Indeed, each pair in the ball is realized by the profile asserted in
Theorem~\ref{thm:families}(\ref{item:family-analytic}), and its prescribed central first gradient is unchanged.
Thus the freedom is infinite-dimensional at the level of realized trace
data. The inclusion does not assert that the complete solution set is a
Banach manifold or that a continuous right inverse into the final
oscillatory fields has been constructed. It also does not assert that the
entire axis first-gradient trace remains fixed under arbitrary $g$.

The first two families alter force allocation and material motion with $m=4$. The third changes the dimensionless strain itself, including the critical and subcritical cases. The fourth varies higher spatial organization after the central first gradient has been fixed. In particular, the analytic family is not confined to a predetermined polynomial ansatz.

For an explicit closed slice of \eqref{eq:functionfamily}, let $R_\Om=\max(1,\sup_\Om|\eta|)$ and define
\begin{equation}
 g=\kappa\eta^3,\qquad\psi=\rho\eta^2/2,\qquad
 |\kappa|\le\frac{\eps_F}{4R_\Om^3},\qquad
 |\rho|\le\min\left(\frac{\eps_F}{2R_\Om^2},\frac{J_*}{2}\right),
 \label{eq:rectangle}
\end{equation}
where
\[
 J_*:=\Pi_{{\rm ax},0}''(0)-4A\Pi_{{\rm ax},0}(0)
                 +(1+h)(2+h)>0
\]
is the critical reference strain-supply coefficient from \eqref{eq:strainbalance}. The closed rectangle remains strictly inside the reconstruction neighborhood. Its radius is specified by the finite construction constants, rather than by an unverified numerical tolerance.

There is also freedom in the integral transport. A small nonzero angular-budget parameter can be realized with
\begin{equation}
 I_0=c\left((1-h)^{-1}-4\eta^2\right),\qquad J_0=c\eta.
 \label{eq:nonzeroangular}
\end{equation}
The first correction flux explicitly supplies its axial viscous transport, while the higher corrections can have zero angular moments. Proposition~\ref{prop:angular-completion} realizes this pair by actual leading-profile moment edits and its viscous return by actual higher-order velocity corrections. It then checks the stress supports, flat residual summation and the downstream completion interfaces. This supplies a completed transport deformation, not just an algebraic assignment of a mixed flux.

\section{Core physics: motion, force and vorticity supply}\label{sec:physics}
\subsection{Axis motion does not determine the pressure--viscosity partition}
For general $G$ define
\begin{equation}
 H_*=D\eta+dG,\qquad
 \calM=A(1-2\eta G)G+H_*G',\qquad
 \calP=4A\eta\Pi_{\rm ax}-d\Pi_{\rm ax}',\qquad Z_* =\calP-\calM.
\end{equation}
The exact late axis traces give
\begin{equation}
 (D_tu)_z=\frac{q^{-A-1}}L\calM,\qquad
 -p_z=\frac{q^{-A-1}}L\calP,\qquad
 \Delta u_z+f_z=-\frac{q^{-A-1}}LZ_*.
 \label{eq:axisforces}
\end{equation}
For a root $\eta_H$ of $H_*$, the fixed-similarity curve
\begin{equation}
 z_H(t)=\eta_H\left(\frac{1-t}{d_H}\right)^D,
 \qquad G_H=-D\eta_H/d_H
 \label{eq:particle}
\end{equation}
is an actual material trajectory. Its acceleration is $a_H=A G_Hq_H^{-A-1}/d_H$.

With $G=4\eta+j_0$, $j_0>0$, this particle approaches the origin from below. The source reference has $Z_*(\eta_H)>0$, so pressure supplies more than the net acceleration and viscosity brakes the particle. The modification in Theorem~\ref{thm:families}(\ref{item:family-partition}) makes the pressure vanish or oppose the same acceleration. It leaves the full first axis gradient unchanged and changes the higher radial derivatives that enter the Laplacian. Thus identical local velocity, strain and rotation can accompany opposite signs of viscous work.

This observation is compatible with global dissipation. For a smooth incompressible field, with $S_u=(\nabla u+\nabla u^T)/2$,
\begin{equation}
 \nu u\cdot\Delta u=\Div(2\nu S_u u)-2\nu|S_u|^2.
 \label{eq:localenergy}
\end{equation}
The paired axis gradients have the same irreversible dissipation density $2\nu|S_u|^2$. Their viscous transport term can nevertheless have different signs. Spatial integration recovers the usual negative viscous energy contribution \cite{ConstantinFoias1988}. Positive local viscous work therefore represents redistribution into the sampled region, not a violation of the energy balance.

\subsection{A stationary singular center and signed material motion}
For the fixed-tilt family, $H_j/d=D\eta/d+4\eta+j$ is strictly increasing. Its unique zero obeys
\[
 \eta_j=-\frac{j}{D+4}+O(j^3),\qquad
 G_j(\eta_j)=\frac{D}{D+4}j+O(j^3).
\]
For sufficiently small $|j|$, $Z_j(\eta_j)<-\gamma/2$. The viscous acceleration points upward throughout the family. When $j>0$ it drives the upward motion against pressure; when $j<0$ it brakes the downward motion driven by pressure. The leading pressure and viscous powers are of opposite order $j(1-t)^{-2A-1}$, leaving an order $j^2(1-t)^{-2A-1}$ positive net gain.

At $j=0$, the origin is an exact stationary material particle although its gradient diverges. The axial force balance is
\begin{equation}
 u(0,t)=D_tu(0,t)=0,
 \qquad -p_z(0,t)=-\gamma(1-t)^{-A-1},\qquad
 \nu\Delta u_z(0,t)+f_z(0,t)=\gamma(1-t)^{-A-1}.
 \label{eq:stationary}
\end{equation}
These forces do no work on a particle with zero velocity. The full velocity supremum still diverges through the positive swirl profile at shrinking nonzero radii. Each fixed nonzero axis particle stays away from the origin under the outward axis motion and remains individually regular. This distinguishes a singular spatial supremum from divergent speed on every axis trajectory.

\subsection{Stretching and direct vorticity diffusion can be separated}
Put $\tau=1-t$. For the unbiased family $G=m\eta$, the common normalized swirl has $F_*(0)=C^{-1}$ and the completed center gradient is
\begin{equation}
 \nabla u(0,t)=
 \begin{pmatrix}
 -m/(2\tau)&-C^{-1}\tau^{-1-h}&0\\
 C^{-1}\tau^{-1-h}&-m/(2\tau)&0\\
 0&0&m/\tau
 \end{pmatrix}.
 \label{eq:centergradient}
\end{equation}
Consequently $\omega_z(0,t)=2C^{-1}\tau^{-1-h}$. The exact vorticity equation yields
\begin{equation}
 \boxed{\nu\Delta\omega_z+(\curl f)_z
       =\frac{1+h-m}{\tau}\omega_z.}
 \label{eq:vorticitybalance}
\end{equation}
The force curl is flat, so the coefficient determines the leading direct viscous contribution. It is negative for $m>1+h$, zero for $m=1+h$, and positive for the realizable subcritical interval. For $m=1$, stretching supplies coefficient $1$ of the growth rate $1+h$, and direct diffusion supplies the remaining positive coefficient $h$.

Viscous supply of strain is a different quantity. With $s=\partial_z u_z$ and
\[
 K_p=\Pi_{\rm ref}''(0)-4A\Pi_{\rm ref}(0)>0,
\]
the differentiated axial balance gives
\begin{equation}
 \nu\Delta s+\partial_zf_z
   =[K_p+m(m+1)]\tau^{-2}>0.
 \label{eq:strainbalance}
\end{equation}
This positive strain-supply coefficient persists across the sign change in \eqref{eq:vorticitybalance}. At the critical member the leading direct vorticity diffusion vanishes, while viscosity still participates in maintaining the strain that stretches vorticity.

The subcritical profiles have a corresponding local shape. Writing $\delta=m-1-h$, their analytic core satisfies $\varphi_X(0,0)/\varphi(0,0)=-\delta/4$. For $\delta<0$, angular velocity initially increases with radius squared and the leading radial vorticity curvature is positive. A short interval of negative azimuthal shear occurs inside the stress-free core. The construction places its active edge after positive shear has been recovered; that inner sign change is consistent with the annular realization conditions.

In the analytic slice \eqref{eq:rectangle}, write
$\delta_{\rm fam}F:=F_{\kappa,\rho}-F_{0,0}$ for a finite family difference,
not the spatial Laplacian $\Delta$. The central first gradient and critical
balance remain fixed, but
\begin{equation}
 \delta_{\rm fam}(\partial_z^3u_z)(0,t)=6\kappa\tau^{-2+2h},\qquad
 \nu\Delta s+\partial_zf_z=(J_*+\rho)\tau^{-2}
 \label{eq:higherfreedom}
\end{equation}
in normalized units. The first parameter changes higher axis geometry; the second changes pressure curvature and the amount of strain supplied through viscosity. Neither changes the center pressure value or its prescribed linear tilt.

\subsection{What a proposed mechanism restriction must resolve}
These families test three possible pointwise restrictions: that a singular material particle must be accelerated by pressure, that viscosity must locally brake it, and that central vorticity growth must have a fixed sign of direct viscous supply. The completed examples realize the alternatives just described. A mechanism criterion based only on axis velocity and its first gradient therefore cannot uniquely assign these force channels.

This conclusion concerns pointwise core diagnostics. Geometric regularity criteria such as the vorticity-direction condition of Constantin and Fefferman \cite{ConstantinFefferman1993} control relations across high-vorticity regions; they are not contradicted by agreement of one axis gradient. The construction instead identifies information that a more complete diagnostic may need: higher radial organization, canonical pressure, cumulative transport and the spatial distribution of the realized stress.

\section{Implementation strategies within the design space}
\label{sec:directions}
The compatibility--realization framework distinguishes profile selection from stress
generation without treating them as mutually exclusive classes of
solutions. A profile-led strategy first proposes geometry, velocity and
pressure data, and then seeks compatible transport and an actual realization
of the required stress. A dynamics-led strategy first selects an amplifying
background or disturbance mechanism, and then seeks profile data whose
momentum demand can be supplied by the resulting fluctuations. Short-wave,
shearing-wave, periodic or transient amplification can motivate such choices,
but an amplification calculation is not by itself a singular construction;
relevant precedents are \cite{Lifschitz1991,Friedlander1991,Craik1986,Albritton2022}.

Within a fixed formulation, these are different entry points into the
coupled equations for $(\mathbf d,\xi)$. They are not asserted to be
interchangeable algorithms, and neither a compatible mean profile nor a
candidate stress generator closes the problem alone. A directly selected
profile can still require dynamically amplified corrections; a selected
amplification mechanism must still satisfy the actual pressure, transport,
matching and smooth-force constraints. Each stage must pass its actual
state---including cumulative moments, parameter derivatives and remaining
correction resources---to the next stage.

The explicit pressure formula, angular operator and moment correction in
Section~\ref{subsec:explicit} belong to the anisotropic class used here.
Changing geometry or the completion mechanism requires identifying and
verifying the corresponding compatibility and realization conditions anew.
The framework records these obligations without claiming that the same
operator, cone or pulse construction applies to every geometry. The
completed results established here remain the families of Theorem~\ref{thm:families} and
the compatible angular-budget extension; alternative global realizations
are not existence results of the present paper.

\section{Conclusion}
\label{sec:conclusion}
The compatibility--realization framework treats singular-flow construction as a coupled problem:
the prescribed core, its pressure and transport budgets, and the mechanism
that supplies the required momentum transfer must be selected together.
The source-dependent completion used here realizes several non-rigid
families, including opposite signs of leading viscous work at fixed axis
kinematics, signed material motion under a common canonical pressure trace,
and a connected transition in the leading direct viscous supply of central
vorticity.

The analytic deformation theorem gives a local lifting of an
infinite-dimensional neighborhood of prescribed traces within a fixed
central-kinematics fiber. The explicit angular right inverse and kernel
separate compulsory transport compatibility from additional design freedom.
These are concrete statements about admissible profiles with full
realizations under the stated completion input, not a classification or a
manifold theorem for all singular solutions. Observable fibers provide a
way to express what a proposed core diagnostic can and cannot determine,
while the compatibility--realization formulation separates the obligations
shared by a construction problem from the sufficient conditions of one
implementation.

\paragraph{Verification and availability.}
The appendices give the reconstruction arguments and the precise completion input. The ancillary script \texttt{verify\_identities.py} checks the displayed algebraic identities symbolically. The existence arguments are the analytic core and connection estimates, the actual moment inverses and the angular-background extension proved in the appendices; they are not consequences of the symbolic tests. The external completion results of \cite{OpenAI2026} remain the explicit input in Assumption~\ref{ass:completion}.

\paragraph{Acknowledgment of AI assistance.}
OpenAI ChatGPT and Codex assisted exploratory derivations, literature discovery, scientific interpretation, drafting, code generation and consistency checks. The author directed the research questions and scope. Verification was assisted by AI. The human author is responsible for the final manuscript.

\appendix
\section{Axis identities and completed profile families}\label{app:axis}

This appendix records the axis calculations underlying the comparisons in Theorem~\ref{thm:families}. We distinguish exact trace identities from the source-dependent realization of the profiles to which they apply. We work first in the viscosity-one normalization of \eqref{eq:coordinates}; the physical viscosity is restored at the end. Sections~\ref{subsec:app-core-contraction}--\ref{subsec:app-partition-reconstruction} supply the core and global reconstruction proofs. Full-flow realization then uses Assumption~\ref{ass:completion} through Appendix~\ref{app:completion}. The preserved observations are those of Section~\ref{subsec:fibers}.

\subsection{Axis force decomposition}\label{subsec:app-axis-force}
Let $G(\eta)$ be the axial trace and let $\Pi_{\rm ax}(\eta)$ be the canonical pressure trace in the late, uncut axis neighborhood. Define
\begin{equation}
 H_*=D\eta+dG,\qquad
 \calM=A(1-2\eta G)G+H_*G',\qquad
 \calP=4A\eta\Pi_{\rm ax}-d\Pi_{\rm ax}',
 \qquad Z_*:=\calP-\calM .
 \label{eq:app-axis-def}
\end{equation}
The coordinate differentiation in the source completion gives the exact axis balances
\begin{equation}
 (D_tu)_z=\frac{q^{-A-1}}{L}\calM,\qquad
 -\partial_zp=\frac{q^{-A-1}}{L}\calP,\qquad
 \nu\Delta u_z+f_z=-\frac{q^{-A-1}}{L}Z_* .
 \label{eq:app-axis-balance}
\end{equation}
The last identity is the full momentum equation evaluated on the completed axis. It includes the pressure derivative and all positive-order radial corrections; it is stronger than retaining only the leading radial Laplacian. Statements about the viscous term alone use the flatness of $f$ and its derivatives from Assumption~\ref{ass:completion}. The displayed sums containing $f$ are exact, whereas the sign conclusions for viscosity alone are leading asymptotic statements as $t\uparrow1$; flatness does not set $f$ identically to zero for $t<1$.

Suppose that $\eta_H$ is a zero of $H_*$. Then the axis characteristic through this trace is
\begin{equation}
 z_H(t)=\eta_H\left(\frac{\tau}{d_H}\right)^D,\qquad
 q_H=\frac{\tau}{d_H},\qquad \tau=1-t,\qquad d_H=1-\eta_H^2,
 \label{eq:app-particle}
\end{equation}
and its axial trace is $G_H=-D\eta_H/d_H$. Since $H_*(\eta_H)=0$, one has
\begin{equation}
 1-2\eta_HG_H=\frac{L_H}{d_H},\qquad
 \calM_H=\frac{A L_HG_H}{d_H},\qquad
 a_H:= (D_tu_z)_H=\frac{A G_H}{d_H}q_H^{-A-1}.
 \label{eq:app-particle-balance}
\end{equation}
These formulas identify a material particle, rather than a time-dependent spatial maximizer. They will be used to interpret the force partitions below.

\subsection{\texorpdfstring{The $m=4$ pressure--viscosity partition}{The m=4 pressure--viscosity partition}}\label{subsec:app-axis-partition}
Take $G(\eta)=4\eta+j_0$ with $0<j_0\ll1$. Let $\Pi_0$ be the even reference pressure and let $\eta_H$ denote the unique zero of $H_*$. The reference construction has $\cal P_H>0$ and $Z_{*,0}(\eta_H)>0$. It follows from \eqref{eq:app-axis-balance} that pressure supplies more than the net acceleration, while the viscous term is negative at the distinguished particle:
\begin{equation}
 -\partial_zp_0> a_H,\qquad \nu\Delta u_{z,0}+f_z<0
 \quad\text{at }z_H(t),
 \label{eq:app-reference-partition}
\end{equation}
The acceleration $a_H$ in \eqref{eq:app-particle-balance} already includes the similarity factor, so the displayed comparison is between physical acceleration terms.

The pressure can be changed by a small linear tilt while the same $G$ and the same first Cartesian velocity gradient are retained. More precisely, for a prescribed finite $\beta\ge0$, choose
\begin{equation}
 \Pi_\beta(\eta)=\Pi_0(\eta)+c_\beta\eta,\qquad
 c_\beta=\frac{\cal P_H+\beta\calM_H}{d_H-4A\eta_H^2},
 \label{eq:app-beta}
\end{equation}
where the quantities on the right are evaluated for the reference data. The denominator is positive for the small offset under consideration. For a fixed compact range $0\le\beta\le\beta_{\max}<\infty$, choose $j_0$ sufficiently small for the pressure edits. The reference member and the modified family are reconstructed separately with common axis data, as proved in Section~\ref{subsec:app-partition-reconstruction}; they are not connected through a zero of $Z_*(\eta_H)$. No bound uniform as $\beta_{\max}\to\infty$ is asserted. The outer pressure change is realized by the finite-dimensional moment correction described in Appendix~\ref{app:completion}; it is not an independently added pressure field. At the same material particle,
\begin{equation}
 -\partial_zp_\beta=-\beta a_H,\qquad
 \nu\Delta u_{z,\beta}+f_{z,\beta}=(1+\beta)a_H,
 \label{eq:app-beta-balance}
\end{equation}
with $a_H$ again denoting the physical acceleration. Thus the two completed fields have identical prescribed late axis motion and first gradient, but opposite signs of leading viscous work for $\beta>0$.

The comparison does not violate the local energy identity. If $S_u=(\nabla u+\nabla u^T)/2$, then
\begin{equation}
 \nu u\cdot\Delta u=\nabla\cdot(2\nu S_u u)-2\nu|S_u|^2 .
 \label{eq:app-local-energy}
\end{equation}
The paired fields have the same first gradient on the axis and hence the same local dissipation density there. Their viscous transport term differs because their second and higher radial organization differs. The sign change is therefore a redistribution statement, not a positive total viscous-energy source. Taking $\mathcal O$ to record the common late axis velocity and first gradient, this pair lies in one observation fiber while the leading pressure--viscosity partition differs. It therefore disproves identifiability of that partition from these observations in the sense of \eqref{eq:identifiability}.

\subsection{One tilted pressure and signed material motion}\label{subsec:app-axis-signed}
Fix a small nonzero $\gamma>0$ and prescribe the canonical axis pressure trace and axial velocity trace by
\begin{equation}
 \Pi_\gamma=\Pi_{\rm ref}+\gamma\eta,\qquad G_j=4\eta+j,\qquad |j|\le j_* .
 \label{eq:app-signed-data}
\end{equation}
The transport root is the unique solution of
\[
 H_j(\eta)=D\eta+d(4\eta+j)=0,
\]
and satisfies
\begin{equation}
 \eta_j=-\frac{j}{D+4}+O(j^3),\qquad
 G_j(\eta_j)=\frac{D}{D+4}j+O(j^3).
 \label{eq:app-signed-root}
\end{equation}
Writing $K_p=\Pi_{\rm ref}''(0)-4A\Pi_{\rm ref}(0)>0$, the source and pressure coefficients at this root obey
\begin{equation}
 \calM_j=\frac{AD}{D+4}j+O(j^3),\qquad
 \calP_j=-\gamma+\frac{K_p}{D+4}j+O(\gamma j^2+j^3).
 \label{eq:app-signed-coefficients}
\end{equation}
After reducing $j_*$ if necessary, both $\calP_j<0$ and $Z_j=\calP_j-\calM_j\le-\gamma/2$. Equation~\eqref{eq:app-axis-balance} then gives three material behaviors under the same canonical axis pressure trace:
\begin{enumerate}
\item for $j>0$, the particle approaches from below and the positive viscous acceleration acts against the pressure force;
\item for $j=0$, the origin is stationary, while its first velocity gradient and the off-axis swirl still become singular;
\item for $j<0$, the particle approaches from above and accelerates downward under the pressure force, while viscosity brakes it.
\end{enumerate}
At $j=0$, $u(0,t)=D_tu(0,t)=0$ but the complete velocity supremum still diverges through the shrinking positive-swirl region. The distinction is between a spatial supremum and the behavior of any one fixed axis particle.

\subsection{Stretching across the critical rate}\label{subsec:app-axis-stretching}
For the unbiased trace $G=m\eta$ with a fixed nonzero tilt, the late center gradient is
\begin{equation}
 \nabla u(0,t)=
 \begin{pmatrix}
 -m/(2\tau)&-C^{-1}\tau^{-1-h}&0\\
 C^{-1}\tau^{-1-h}&-m/(2\tau)&0\\
 0&0&m/\tau
 \end{pmatrix}.
 \label{eq:app-center-gradient}
\end{equation}
The exact vorticity balance at the stationary center is
\begin{equation}
 \nu\Delta\omega_z+(\nabla\times f)_z
 =\frac{1+h-m}{\tau}\omega_z .
 \label{eq:app-vorticity-balance}
\end{equation}
Proposition~\ref{prop:critical-connection} proves the realizability interval in Theorem~\ref{thm:families}(\ref{item:family-stretching}), with a fixed small positive interval of $h$ and a finite $\Delta_->0$ satisfying
\begin{equation}
 1+h-\Delta_-\le m\le4,
 \label{eq:app-m-range}
\end{equation}
and can be chosen to include $m=1$. For each completed member, \eqref{eq:app-vorticity-balance} gives three regimes for the leading direct viscous supply: positive for $m<1+h$, vanishing for $m=1+h$, and negative for $m>1+h$. At the critical rate the exact statement is $\nu\Delta\omega_z=-(\nabla\times f)_z$ at the center, not an identically vanishing Laplacian for all $t<1$. At $m=1$, stretching supplies the coefficient $1$ of the growth rate $1+h$, while direct diffusion supplies the remaining coefficient $h$.

The quantity in \eqref{eq:app-vorticity-balance} is distinct from the viscous supply of axial strain. If $s=\partial_z u_z$ and
\[
 K_p=\Pi_{\rm ref}''(0)-4A\Pi_{\rm ref}(0)>0,
\]
then
\begin{equation}
 \nu\Delta s+\partial_zf_z=[K_p+m(m+1)]\tau^{-2}>0 .
 \label{eq:app-strain-balance}
\end{equation}
Thus the direct vorticity-diffusion channel can change sign while the leading viscous supply of strain remains positive. At fixed $h$ and $C$, the members share the central vorticity trace $\omega_z=2C^{-1}\tau^{-1-h}$, but \eqref{eq:app-center-gradient} shows that their complete first gradients vary with $m$. This is a different observation fiber from the fixed-axis-gradient comparison above. For $m<1+h$, the analytic core has a short interval of negative azimuthal shear near the origin; the active annulus begins only after positive shear and the strict cone have been recovered.

\subsection{Higher axis geometry and pressure curvature}\label{subsec:app-axis-analytic}
Fix one completed critical member, keep $h$ and $F_*=F_{*,0}$ fixed, and use the common complex neighborhood $\Om$ of Theorem~\ref{thm:families}(\ref{item:family-analytic}). If $g$ and $\psi$ are bounded and holomorphic on $\Om$, real on $[-1,1]$, and satisfy
\begin{equation}
 \norm g_\Om+\norm\psi_\Om<\eps_F,\qquad g(0)=g'(0)=0,
 \label{eq:app-function-neighborhood}
\end{equation}
then
\begin{equation}
 G=(1+h)\eta+g(\eta),\qquad
 \Pi_{\rm ax}=\Pi_{{\rm ax},0}+\psi(\eta)
 \label{eq:app-function-data}
\end{equation}
has the source-dependent realization asserted there. The pressure-restoration step uses three compactly supported swirl modifications on a prepared region where $U=0$. Its moment map controls $\Delta I$, $\Delta S$ and $\Delta C_p$, with the sign convention for $C_p$ defined in \eqref{eq:app-pressure-moment}. This step requires a uniformly nonsingular bump-response matrix, not merely three differently written moment functionals. Solving
\begin{equation}
 \Delta I=0,\qquad \Delta S=0,\qquad \Delta C_p=-\psi
 \label{eq:app-pressure-matching}
\end{equation}
leaves the prepared exterior unchanged and restores the canonical pressure exactly. Here $\Delta$ denotes the finite increment of this pressure-restoration step, after the other target moments have been restored. For a simultaneous change of $G$ and pressure, the accumulated five-moment defect, including that generated by the core change, must instead be restored using \eqref{eq:app-five-map}; the zero targets in \eqref{eq:app-pressure-matching} do not discard earlier defects. The quadratic terms in $\Delta S$ and $\Delta C_p$ are retained. The actual rank and parameter-uniform moment inverse are proved in Appendix~\ref{app:completion}; Section~\ref{subsec:app-local-lifting} supplies the core contraction, flat attachment, finite connection and mean-preserving modulation for the entire function ball.

Because $g(0)=g'(0)=0$ and $F_*$ is fixed, the central velocity and first gradient are unchanged. The assertion for every pair in the stated ball therefore gives \eqref{eq:trace-surjectivity}. It does not preserve the entire axis first-gradient trace for arbitrary $g$, nor does it provide a continuous selection of the final oscillatory fields.

An explicit closed slice is
\begin{equation}
 g(\eta)=\kappa\eta^3,\qquad \psi(\eta)=\frac{\rho}{2}\eta^2,
 \label{eq:app-explicit-slice}
\end{equation}
with $|\kappa|$ and $|\rho|$ below the finite reconstruction radius. It preserves the center gradient and the critical vorticity balance. Write $\delta_{\rm fam}F:=F_{\kappa,\rho}-F_{0,0}$ for a finite difference between family members, to distinguish it from the spatial Laplacian $\Delta$. Then
\begin{equation}
 \delta_{\rm fam}(\partial_z^3u_z)(0,t)=6\kappa\tau^{-2+2h},\qquad
 \delta_{\rm fam}(\nu\Delta s+\partial_zf_z)=\rho\tau^{-2}.
 \label{eq:app-slice-observables}
\end{equation}
The first parameter changes higher axis geometry and the second changes pressure curvature and strain supply. Neither is a change of the center pressure value or its linear tilt.

\subsection{The analytic core and its parameter estimates}
\label{subsec:app-core-contraction}
We give the reconstruction estimates used in the preceding assertions.
Only the quartic regularizer and the original prepared exterior of slope
four are needed. No assertion uniform in regularizer order, zero tilt or
$h=0$ is used. All constructions in the next three subsections are in
viscosity-one coordinates.

Write $\varphi=CE/\sqrt{2X}$, $\varphi_*=CF_*$, and let
$\zeta=\Lambda^{-1}(\log\varphi_*)'$. For prescribed $G,\Pi_{\rm ax}$
put
\[
 H_*=D\eta+dG,\quad W_*=1-2D\eta G-dG',\quad
 \chi=-H_*\zeta/L,
\]
with $Z_*$ as in \eqref{eq:app-axis-def}. On setting $Y=\Lambda X$, $D_Y=Y\partial_Y$,
$\varphi=\varphi_*\Phi$, $U=G+u/\Lambda$, introduce
\[
 g_a=\varphi_*/C,\quad
 p_1^{\rm rad}(Y)=\int_0^Y g_a^2\Phi^2\,\dd Y',\quad
 \overline u=Y^{-1}\int_0^Y u\,\dd Y',\quad
 B=-2D\eta\overline u-d\partial_\eta\overline u.
\]
Here $p_1^{\rm rad}$ is a radial pressure increment, not the stress
coordinate $p_1$ of \eqref{eq:cone-coordinates}. Then
$W=W_*+B/\Lambda$, $H_c=H_*+du/\Lambda$ and
$\Pi=\Pi_{\rm ax}+p_1^{\rm rad}/\Lambda$.
Direct substitution in the stress-free profile equations gives
\begin{equation}
 \begin{aligned}
 2(Y\Phi_{YY}+2\Phi_Y)&=-\chi\Phi+\Lambda^{-1}\mathcal R_1,\\
 2(Yu_{YY}+u_Y)&=-Z_*/L+\Lambda^{-1}\mathcal R_2,\\
 L\mathcal R_1&=[W+h(1-2\eta U)+du\zeta]\Phi
                     +W D_Y\Phi+H_c\Phi_\eta,\\
 L\mathcal R_2&=[A(1-4\eta G)+dG']u-2A\eta u^2/\Lambda
                 +W D_Yu+H_*u_\eta+du u_\eta/\Lambda\\
              &\hspace{1em}-4A\eta p_1^{\rm rad}
                   +d(p_1^{\rm rad})_\eta-2\eta D_Yp_1^{\rm rad}.
 \end{aligned}
 \label{eq:core-normalized-system}
\end{equation}
The equalities, including the parameter derivatives, are re-solved for
each input. There is no independent resetting of the pressure or radial
mean.

We use the coefficient-space estimates of
\cite[Lemma~B.1]{OpenAI2026}. For completeness, the norm for
$F=\sum_{\alpha\ge0}F_\alpha(\eta)Y^\alpha$ is
\begin{equation}
 a_{\alpha\beta}=
 \frac{20^{-\alpha}\varrho^{-\beta}\beta!
             \binom{\alpha+\beta}{\beta}}
      {(\alpha+1)^2(\beta+1)^2},\qquad
 \|F\|_{\mathscr B_\varrho}
 =\sup_{\alpha,\beta,\eta\in I}
       \frac{|\partial_\eta^\beta F_\alpha(\eta)|}{a_{\alpha\beta}}.
 \label{eq:core-coefficient-norm}
\end{equation}
Here $I$ is a fixed slightly enlarged real interval. Radial averaging,
multiplication and integration have bounded forms in this space. If
$\mathcal J_\ell$ inverts $Yv_{YY}+\ell v_Y$ with $v(0)=0$, then
\[
 (\mathcal J_\ell F)_{\alpha+1}
 =\frac{F_\alpha}{(\alpha+1)(\alpha+\ell)},\qquad \ell=1,2.
\]
In particular $\mathcal J_\ell[(\partial_\eta F)D_YG]$ is bounded by
$C_\varrho\|F\|\|G\|$: the extra radial degree compensates the parameter
derivative. This is why ordinary fixed-order formal Taylor series are
not being used as an existence argument.

Let $\mathcal T=\mathcal J_2\mathsf M_\chi/2$, where $\mathsf M_\chi$
is multiplication by $\chi$. If its operator norm is bounded by
$C_\chi$, the iterates satisfy
\begin{equation}
 \|\mathcal T^k\|\le
 \frac{(40C_\chi)^k}{k!(k+1)!}.
 \label{eq:core-volterra-inverse}
\end{equation}
Thus $1+\mathcal T$ has a bounded inverse without requiring $\chi$ to
be small. With
\[
 f_0(z)=\sum_{\alpha\ge0}\frac{(-z/2)^\alpha}{\alpha!(\alpha+1)!},
 \quad \Phi_0=f_0(Y\chi),\quad u_0=-YZ_*/(2L),
\]
solve \eqref{eq:core-normalized-system} by
\begin{equation}
 (\Phi,u)=(\Phi_0,u_0)+\frac1{2\Lambda}
 \bigl((1+\mathcal T)^{-1}\mathcal J_2\mathcal R_1,
                         \mathcal J_1\mathcal R_2\bigr).
 \label{eq:core-fixed-point}
\end{equation}
On any fixed ball around $(\Phi_0,u_0)$, the integrated remainders and
their Lipschitz constants are bounded by the coefficient bounds just
listed. First choose $\Lambda$ so that this map sends the ball into a
strictly smaller concentric ball and has Lipschitz constant at most
$1/2$. Then choose $C$ to bound $g_a$ on the \emph{complex}
coefficient neighborhood, not just on the real interval. These choices
are uniform for compact sets of the fixed coefficient data. The fixed
point satisfies, in every required fixed derivative norm on
$0\le Y\le4.1$,
\begin{equation}
 \Phi=f_0(Y\chi)+O(\Lambda^{-1}),\qquad
 u=-YZ_*/(2L)+O(\Lambda^{-1}).
 \label{eq:core-uniform-comparison}
\end{equation}
For the quartic data below $0\le\chi\le1$, and the alternating-series
bound gives $f_0(z)>.265$ on $[0,4.1]$. Thus the reconstructed core has
positive $F$, is regular at the Cartesian axis and solves $T_0=0$
exactly. Positivity of the shear at every inner radius is not required.

\subsection{Global connection across the critical stretching rate}
\label{subsec:app-critical-connection}
The center identities alone do not give the global connection. We provide
it here for the fixed quartic regularizer and the original exterior of
slope four. Throughout this subsection $C_j$ denotes a finite constant
in the indicated fixed parameter-derivative norm. A constant may depend
on parameters already chosen, but not on parameters explicitly chosen
later. Constants in normalized pressure and moment estimates are uniform
in the subsequently small positive $h$.

\begin{proposition}[A completed cross-critical interval]
\label{prop:critical-connection}
Under Assumption~\ref{ass:completion}, there are finite choices of
$K=P_*^2$, $\Lambda$, $C$, a fixed $\gamma>0$, and
$0<h_-<h_+<1/100$ such that, for some $\eps_->0$ and
$\Delta_-=\eps_-K/\Lambda$, every
\[
 h\in[h_-,h_+],\qquad 1+h-\Delta_-\le m\le4
\]
has a globally matched leading profile with $G=m\eta$.
At fixed $h$, the axis pressure is
$\Pi_{{\rm ref},h}+\gamma\eta$ for all these profiles, and their
center normalization is the same $C^{-1}$. The positive interval of
$h$ can be chosen with $h_+<\Delta_-/4$, so it includes $m=1$ for every
such $h$.
\end{proposition}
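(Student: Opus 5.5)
The plan is to build the whole family from one anchor, the $m=4$ quartic-regularized source profile with the original prepared exterior of slope four. We then move $m$ downward through the analytic core of Section~\ref{subsec:app-core-contraction}, keeping the exterior fixed and absorbing every moment defect with the finite-rank correction of \eqref{eq:moment-lifting}.

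\textbf{Step 1: core data.} For $G=m\eta$ and $\Pi_{\rm ax}=\Pi_{{\rm ref},h}+\gamma\eta$, compute from \eqref{eq:app-axis-def}
\[
H_*=D\eta+dm\eta,\qquad W_*=1-2Dm\eta^2-dm,\qquad Z_*=\calP-\calM .
\]
These coefficients are polynomial in $(\eta,m,h)$, jointly continuous, and lie in a compact set uniformly for $m\in[1/2,4]$ and $h\le1/100$. Fix the quartic regularizer $\varphi_*$ with $K=P_*^2$ so that $0\le\chi\le1$ on $I$ for all such $m,h$.

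\textbf{Step 2: core solve.} Apply the fixed point \eqref{eq:core-fixed-point} and the Volterra bound \eqref{eq:core-volterra-inverse}. Since the data range over a compact set, a single $\Lambda$ and a single $C$ (chosen on the complex neighborhood) serve the whole parameter rectangle. Then \eqref{eq:core-uniform-comparison} gives $\Phi=f_0(Y\chi)+O(\Lambda^{-1})$ and $u=-YZ_*/(2L)+O(\Lambda^{-1})$ uniformly, with parameter derivatives. The center normalization is $\varphi_*(0)=$ fixed, so $F_*(0)=C^{-1}$ is common to all members. The pressure trace is prescribed as input, so it is common at fixed $h$.

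\textbf{Step 3: center structure at the critical rate.} Near $m=1+h$, read the curvature from the $Y$-linear coefficient of $\Phi$: $\varphi_X(0,0)/\varphi(0,0)=-\delta/4$ with $\delta=m-1-h$. For $\delta<0$ this creates an inner interval of negative shear. We do not require positivity there. Instead we place the flat stress edge at $Y\approx4.1$, where $f_0>.265$. The task is to show that the shear has recovered and the cone \eqref{eq:cone} holds strictly on the active annulus.

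The critical member itself is not singular for the core: the core solve only sees $\chi$, $Z_*$ and $W_*$, none of which degenerates at $m=1+h$. Continuity in $m$ therefore extends the strict cone margins slightly below $1+h$. The size of that extension is $\eps_-$ times the unit of $\Phi$-curvature, namely $K/\Lambda$, which gives $\Delta_-=\eps_-K/\Lambda$. Here $\eps_-$ is fixed by the normalized edge margins, which are uniform in small $h$.

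\textbf{Step 4: global matching and completion.} Connect the edge to the fixed slope-four exterior. The mismatch in $M(\infty)$, $S(\infty)$, the angular pair \eqref{eq:angular} and the pressure moments is a function of $(m,h)$ bounded uniformly. Restore it with the right inverse $\mathcal R_B$ of Lemma~\ref{lem:prepared-rank}, using the uniform rank bound and quadratic remainder, so the implicit correction stays inside the edge margins. Then choose $h_-<h_+<\min(1/100,\Delta_-/4)$. Since $\Delta_-$ does not depend on $h$ (constants are uniform in small $h$), we get $1\ge1+h-\Delta_-$, so $m=1$ is included. Finally invoke Assumption~\ref{ass:completion} for each member.

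\textbf{Main obstacle.} I expect Step 3 to be the crux: obtaining strict cone margins uniformly across $m=1+h$ and slightly below. The negative inner shear must be shown to be confined inside the stress-free core, at distance $O(1)$ in $Y$ from the edge. My first attempt would be a direct comparison with the explicit $f_0(Y\chi)$, bounding the $O(\Lambda^{-1})$ and $O(\delta)$ perturbations by the fixed margin at the edge.
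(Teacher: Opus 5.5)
Your proposal has two genuine gaps, at the two places where the paper's proof does its main work.

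\emph{The exit near $\eta=0$.} Step~3 proposes to perturb from a ``fixed margin at the edge'' and then extend below $m=1+h$ by continuity. No such margin exists. The quartic data give $\chi=O(\eta^4)$, so $f_0(Y\chi)\approx1$ and the leading core carries no shear near $\eta=0$. There $p_1=a$, with $a(X,0)=\delta X/2+[J/12-(m-1)\delta/12+\delta^2/24]X^2+O(X^3)$ and $J=-Z'(0)\asymp K$. At $X_0=4/\Lambda$ the only positive contribution when $\delta\le0$ has size $K/\Lambda^2$. This is invisible at the $O(\Lambda^{-1})$ accuracy of \eqref{eq:core-uniform-comparison}, so comparison with $f_0(Y\chi)$ cannot even show $p_1>0$ at the critical member.

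The paper retains one further order near $\eta=0$, via the refined remainder \eqref{eq:connection-refined-remainder}. This yields the signed bound \eqref{eq:connection-signed-source-bound}, whose positive $KX$ term comes from pressure curvature through the axial correction $U=m\eta+Xa_1$. It then obtains $\Delta_-=\eps_-K/\Lambda$ by comparing $C\delta_-X_0$ with $cKX_0^2$. You guessed the right scale, but your mechanism cannot produce it.

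Moreover, where $\chi$ is small, $p_1$ is small. The strict condition $p_1+p_2^2/p_1>2$ can then only come from $p_2^2/p_1$ being large, that is, from $|N|\approx|Z|$ staying away from zero there. This is the role of the tilt: $Z(0)=-\gamma\neq0$ moves the root of $Z$ to $z\asymp-\gamma/K$, and $\sigma_*$ is chosen so that $\chi>.99$ wherever $|Z|$ is small. You carry $\gamma$ along as given data but never use it; without it, neither alternative is available at $\eta=0$.

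\emph{The connection to the slope-four exterior.} Step~4 treats the passage from a core with $U\approx m\eta$ to an exterior with $U=4\eta$ as a moment defect absorbed by $\mathcal R_B$. Compactly supported bumps $(\delta E,\delta U)$ change integrals, not the trace of $U$ beyond their support. The axial slope must therefore actually be steered along the annulus; the paper uses $U=G_i+r(y)(\eta-z)$ with $r$ increasing from $0$ to $4-m$. This makes $b\neq0$, and the cone then requires control of $(bN/(EQ))_+$. The paper gets this by centering the steering at $z$ and proving that the actual moving root $z_N$ of $N$ stays within $C\gamma(1+e_0^2T)/K^2$ of it.

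The quadratic inverse \eqref{eq:app-analytic-moment-lift} also solves only targets below a fixed radius ($4M_RC_Qr<1$). A defect that is merely bounded uniformly in $(m,h)$ is therefore not enough; the paper makes it small through the $P_*$-scaled normalization in \eqref{eq:connection-full-moment-defect}.

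You further omit several steps the paper needs:
\begin{itemize}
\item the continuation of the active annulus after the exit, including the retained radial increment that supplies the $cK/\Lambda$ floor and the flat activation with a nonvanishing normalized direction;
\item the shape transition reaching $E=C^{-1}f(X/X_i)^{1/10}$ without absorbing an $m$-dependent scalar into $C$;
\item the actual exterior pressure edit that realizes $\gamma\eta$;
\item the justification that the pressure constants are uniform as $h\downarrow0$, which your choice of $h_+$ after $\Delta_-$ requires.
\end{itemize}
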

\begin{proof}
We give the construction and the order of its finite choices explicitly.

\paragraph{1. Preparation and pressure scale.}
Use the outer schedule of \cite[Appendix~A.2]{OpenAI2026}, with
$U=4\eta$, $E=P_*f(X/X_R)^{1/10}$ on its initial reference interval,
where $f=(1+\eta^2)^{-1}$. Fix its cutoff shapes and $M_d,T_d$, then
$P_*$ large and $\lambda>0$ small. Its pressure before the heat
compensation can be written
\[
 \Pi_{{\rm ref},h}=-\frac12\int
       c_h(y)^2f^{2\theta_h(y)}\,\dd y,\qquad 0\le\theta_h\le1.
\]
The initial reference contributes $-5Kf^2/2$. After a fixed transition,
$l=\partial_y\log(\sqrt{2X}E)\le0$, so
$(E^2)_y=(2l-1)E^2\le-E^2$. On one fixed complex neighborhood avoiding
the zeros and poles of $f$, integration consequently gives
\begin{equation}
 -\Pi_{{\rm ref},h}\ge\tfrac52Kf^2,\qquad
 \Pi_{{\rm ref},h}'/\eta\ge0,\qquad
 \|\Pi_{{\rm ref},h}\|_{C^j}\le C_jK.
 \label{eq:connection-pressure-preparation}
\end{equation}
At zero the quotient is interpreted by continuity. The same bound holds
in a smaller fixed holomorphic neighborhood. It is uniform as the later
positive $h$ decreases: the long late tail is exponentially suppressed
in this pressure integral. The source's angular and heat compensations
preserve the pressure increment exactly. The total $S$-moment is still
computed only for $h>0$; the suppression by $h^8$ before the slow tail
in \cite[Equations~(A.17)--(A.18)]{OpenAI2026} leaves an $O(h^7)$
tail contribution. We do not assert an $h=0$ completed profile.

Use the three-bump pressure edit of Lemma~\ref{lem:prepared-rank} on a
fixed pre-tail power patch to add $\gamma\eta$, for one sufficiently
small fixed $\gamma>0$. Its inverse and pre-tail cone margins are
uniform in the later $h$. Beyond this edit the actual terminal
preparation is unchanged, so its possibly shrinking late-edge margins
are not used to choose $\gamma$.

\paragraph{2. Actual core data and the positive-radius exit.}
Put $\delta=m-1-h$ and, on a preliminary compact range
$.9\le m\le4$, set
\begin{equation}
 H_m=\eta(D+md),\quad
 \chi=\frac{H_m^4}{H_m^4+\sigma_*^4},\quad
 \zeta=-\frac{LH_m^3}{H_m^4+\sigma_*^4},\quad
 \varphi_*=\exp\left(\Lambda\int_0^\eta\zeta\,\dd s\right).
 \label{eq:connection-quartic-data}
\end{equation}
The axial source computed from the actual pressure is
\begin{equation}
 \begin{split}
 Z&=-\eta B_{m,h}+\gamma[(4A+1)\eta^2-1],\\
 B_{m,h}&=m(m+1)-2(1+h)m^2\eta^2
           +d\Pi_{{\rm ref},h}'/\eta-4A\Pi_{{\rm ref},h}.
 \end{split}
 \label{eq:connection-general-Z}
\end{equation}
In particular the cubic coefficient is $2(1+h)m^2$, not $2m^3$
away from criticality. Equation~\eqref{eq:connection-pressure-preparation}
gives $cK\le B_{m,h}\le CK$ and bounded fixed derivatives after division
by $K$. There is one simple root $z$ of $Z$ with
$-C\gamma/K\le z\le-c\gamma/K$. To see uniqueness, outside a
sufficiently large multiple of $\gamma/K$ the $-\eta B_{m,h}$ term
dominates; inside it $Z'\le-cK$. Choose $\sigma_*$ small enough that
$\chi>.99$ wherever $|Z|$ is below a fixed positive threshold.
This is possible uniformly on the preliminary compact $(m,h)$ set,
since $H_m$ vanishes only at zero and $Z(0)=-\gamma$.
The denominators stay nonzero on a common smaller complex neighborhood.

Define the exact angular source
\begin{equation}
 S_q=-Wl-h(1-2\eta U)-(D\eta+dU)(\log E)_\eta,
 \quad l=1-a/2.
 \label{eq:connection-angular-source}
\end{equation}
Differentiating the actual cumulative definition of $Q$ gives
\begin{equation}
 Q_y+(1+l)Q=S_q,\qquad
 (p_1)_y=XS_q/L-lp_1.
 \label{eq:connection-source-evolution}
\end{equation}
At the axis $S_q(0,\eta)=\delta+\Lambda L\chi$.
The core solve of Section~\ref{subsec:app-core-contraction} is available
on $0\le Y=\Lambda X\le4.1$. It remains to retain one order of
information beyond \eqref{eq:core-uniform-comparison} near $\eta=0$.
Writing $J=-Z'(0)\asymp K$, its exact center coefficients are
\begin{equation}
 \begin{gathered}
 \varphi_X/\varphi=-\delta/4,\qquad U_X=\gamma/2,\qquad
 U_{X\eta}=J/2,\\
 (S_q)_X=[J-(m-1)\delta]/4,\qquad
 \varphi_{XX}/\varphi=-[J-(m-1)\delta-\delta^2]/24,\\
 a(X,0)=\delta X/2+
       [J/12-(m-1)\delta/12+\delta^2/24]X^2+O(X^3).
 \end{gathered}
 \label{eq:connection-signed-jets}
\end{equation}
These identities also exhibit the negative inner shear when $\delta<0$.
They do not alone justify a uniform exit radius.

Here is the requisite remainder estimate. Let
$\rho=\Phi-f_0(Y\chi)$. The radial traces are
$\rho(0,\eta)=0$ and $\rho_Y(0,\eta)=-\delta/(4\Lambda L)$.
At $\eta=0$ the derivatives of $\chi$ through order three and of
$\zeta$ through order two vanish. Differentiate
\eqref{eq:core-normalized-system} through two parameter derivatives
there, subtract the displayed linear radial trace, and apply the same
radial inverses. The remainder starts at radial degree two; its source
has the extra factor $\Lambda^{-1}$. The terms $H_m\rho_\eta$ do not
introduce the next uncontrolled jet, since $H_m(0)=0$.
The weighted mixed estimate controls the other products. This gives,
for $0\le j\le2$,
\[
 |\partial_\eta^j\rho(Y,0)|+|D_Y\partial_\eta^j\rho(Y,0)|
 \le C\bigl(|\delta|Y/\Lambda+Y^2/\Lambda^2\bigr).
\]
Taylor's formula in $\eta$, with the next derivatives bounded by
\eqref{eq:core-coefficient-norm}, then yields on a fixed small
neighborhood of zero
\begin{equation}
 \begin{aligned}
 |\rho|+|D_Y\rho|&\le C\left[
    |\delta|Y/\Lambda+Y^2(|\eta|^3/\Lambda+\Lambda^{-2})\right],\\
 |\rho_\eta|&\le C\left[
    |\delta|Y/\Lambda+Y^2(\eta^2/\Lambda+\Lambda^{-2})\right].
 \end{aligned}
 \label{eq:connection-refined-remainder}
\end{equation}
No vanishing noncritical radial trace has been assumed in this estimate.

Put $a_1=-Z/(2L)$. The axial equation gives
$U=m\eta+Xa_1+O_{C^j}(X/\Lambda)$ and
$\mu=m\eta+Xa_1/2+O_{C^j}(X/\Lambda)$, whence
\[
 -W-h(1-2\eta U)=\delta+Xb_1+O(X/\Lambda),\qquad
 b_1=da_1'/2+A\eta a_1,\quad b_1(0)=J/4.
\]
The most substantial phase error in
\eqref{eq:connection-angular-source} is controlled by
\[
 C\Lambda X\chi^{3/4}
       \le\epsilon\Lambda\chi+C_\epsilon\Lambda X^4.
\]
The other phase errors from \eqref{eq:connection-refined-remainder}
are bounded by $C\Lambda X^2\chi^{3/4}$,
$C\Lambda X^3\chi^{1/2}$ and $CX^2$, whose Young remainders are
$C_\epsilon\Lambda X^8$, $C_\epsilon\Lambda X^6$ and $CX^2$.
On $X\le4.1/\Lambda$ all these remainders are $o(X)$.
The terms $C|\delta|X$ are a small fraction of $|\delta|$.
Near zero the positive $b_1$ supplies the remaining $KX$ term; away
from that neighborhood $\chi$ has a positive lower bound. Therefore,
with $\delta_\pm=\max(\pm\delta,0)$,
\begin{equation}
 \begin{aligned}
 S_q&\ge c_1\delta_+-C_1\delta_-+c_2KX+c_3\Lambda\chi,\\
 p_1&\ge c_4\delta_+X-C_2\delta_-X+c_5KX^2+c_6\Lambda X\chi.
 \end{aligned}
 \label{eq:connection-signed-source-bound}
\end{equation}
The second inequality follows by the positive integrating kernel in
\eqref{eq:connection-source-evolution}, using bounded positive $\Phi$.
Choose $\eps_->0$ small and restrict
$\delta_-\le\eps_-K/\Lambda$. At $X_0=4/\Lambda$ and on its small
natural right collar, the radial positive term dominates the adverse
one. Hence
\begin{equation}
 p_1\ge c\delta_+/\Lambda+cK/\Lambda^2+c\chi>0,
 \qquad p_1+p_2^2/p_1>2+c_{\rm ex}.
 \label{eq:connection-actual-exit}
\end{equation}
For the second statement, $\chi>.99$ gives the source Bessel lower
bound $p_1>2.3$. On the complement $|N|$ stays away from zero and
$p_2=XN/(LE)$ grows in absolute value with the subsequently large $C$.
In the stress-free core $p_1=a$, $p_2=-b$, so this is precisely the
positive-shear, strict exit needed at the active edge.

\paragraph{3. Reference continuation and flat stress activation.}
Freeze the natural logarithmic $\varphi$ slope and $U$ slope to zero on
a sufficiently short part of this right collar, using the reference
cutoffs of \cite[Equation~(B.22)]{OpenAI2026}. At the end $X_f$ of the
cutoff let $G_r$ be its constant axial trace. The actual radial mean is
\begin{equation}
 \begin{split}
 G_r&=m\eta+4a_1/\Lambda+O_{C^j}(\Lambda^{-2}),\\
 \mu(X_0)&=m\eta+2a_1/\Lambda+O_{C^j}(\Lambda^{-2}),\\
 \mu_r(X)&=G_r+(X_f/X)[\mu_r(X_f)-G_r].
 \end{split}
 \label{eq:connection-retained-increment}
\end{equation}
Choose the cutoff errors smaller than $\Lambda^{-2}$ in the finitely
many required norms. The extra derivative in this mean ranges between
$J/\Lambda$ and $2J/\Lambda$ near zero; it must not be discarded by
replacing $G_r$ or $\mu_r$ with $m\eta$.
Using \eqref{eq:connection-refined-remainder} in the frozen angular
phase gives
$C\chi^{3/4}\le\epsilon\Lambda\chi+C_\epsilon\Lambda^{-3}$.
The other low-jet errors are of smaller order than $K/\Lambda$;
$C|\delta|/\Lambda$ is absorbed by $\delta_+$ or the chosen thin
negative strip. Thus the actual reference obeys
\begin{equation}
 S_{q,r}\ge c\delta_++cK/\Lambda+c\Lambda\chi,\qquad l_r\le1.
 \label{eq:connection-reference-floor}
\end{equation}
Pressure changes on a selected finite interval are $o_C(1)$ and are
controlled by choosing $C$ later. The scalar comparison gives
\[
 p_{1,r}(X)\ge p_{1,r}(X_0)X_0/X+
       \frac{c}{2\Lambda}(X-X_0^2/X).
\]
Take a sufficiently large finite $R_b$ and continue to
$X_b^{\rm ref}=R_b\Lambda$, $X_i=2R_b\Lambda$, so that $p_{1,r}>10$
near the end. The exit alternative in
\eqref{eq:connection-actual-exit} persists on the whole finite reference
interval: on the large-$\chi$ part by its positive angular term, and on
the complement by $N_r=Z+O(\Lambda^{-1})+o_C(1)$ and the nonzero-$Z$
alternative.

With $y=\log(X/X_0)$, use the actual shear reduction
\begin{equation}
 e_a=(1-\kappa_0)\sigma(y/t_1),\quad \kappa=1-e_a,
 \qquad a=\kappa p_{1,r},\qquad
 U_y=-\kappa XN_r/(2L).
 \label{eq:connection-flat-activation}
\end{equation}
Here $\sigma$ is the source flat step and the reference is unchanged
for $X\le X_0$. The exact differences satisfy
$(\log\varphi-\log\varphi_r)_y=e_a p_{1,r}/2$ and
$(U-U_r)_y=e_a XN_r/(2L)$.
Their integrals and the actual moment formulas give
$p-p_r=O(ye_a)$ and $E_r/E=1+O(ye_a)$.
Consequently $P_c-v\ge ce_a$, $P_c>2+c$ and
$(v-2)_+J_c^2<2(P_c-v)^2$ for small $t_1$.
After dividing by the prescribed flat factor,
$T_0=e_aB_0$ with
$B_0(0,\eta)=F(X_0,\eta)p_r(X_0,\eta)\ne0$.
The flat-integral calculation in
\cite[Lemma~A.9 and Equations~(B.27)--(B.31)]{OpenAI2026} supplies
smooth normalized directions and every fixed weighted derivative bound.
This verifies the edge condition, not just a strict inequality away
from a vanishing stress.

Choose the small constant multiplier $\kappa_0$ and the remaining
cutoff widths to continue on this finite interval, then turn off $U_y$
and set $a=.8$. The comparisons above and the shear-reduction argument
of \cite[Proposition~B.5]{OpenAI2026} give a genuine continuation with
a first strict pulse-cone collar and the strict relaxed cone thereafter.
At $X_i$ one has $p_1>8$, $U=G_i$, $U_y=0$, and
\begin{equation}
 \|G_i-m\eta\|_{C^k}\le\delta_{\rm core},\qquad
 \|(G_i)_{\rm even}\|_{C^k}\le\gamma\delta_{\rm core},\qquad
 \delta_{\rm core}=C_{\rm core}/\Lambda\ll K^{-1}.
 \label{eq:connection-core-discrepancy}
\end{equation}
The constant depends on the earlier pressure and regularizer data.
Reflection equivariance and smooth dependence of the core on $\gamma$
give the even-part estimate. An auxiliary core at $\gamma=0$ is used
only for this parity estimate, not as a completed exit.

\paragraph{4. Shape adjustment at a common normalization.}
Set $\ell_i=\log(CE(X_i,\eta))$. Its required parameter norms have a
bound independent of subsequently large $C$ and small transition
widths, by integration of the bounded reference slopes. Choose a fixed
finite length, depending on that bound, and prescribe
\begin{equation}
 \log E=-\log C+y_i/10+(1-\theta)\ell_i+\theta\log f,
 \qquad U=G_i,\quad y_i=\log(X/X_i).
 \label{eq:connection-common-C-shape}
\end{equation}
A sufficiently long smooth transition keeps $.55\le l\le.65$.
At its end $E=C^{-1}f(X/X_i)^{1/10}$ exactly; an $m$-dependent scalar
has not been absorbed into $C$. The same phase estimates as in
\eqref{eq:connection-reference-floor}, with the retained actual radial
increment, give a constant part
$l\delta-(1-l)h+cK/\Lambda$ and the positive phase terms
$c(1-\theta)\Lambda\chi+c\theta\eta^2$.
Choose the later $h_+$ with $h_+\le c_hK/\Lambda$ and reduce
$\eps_-$ if necessary. Then
\begin{equation}
 S_q\ge c\delta_++cK/\Lambda+
            c(1-\theta)\Lambda\chi+c\theta\eta^2.
 \label{eq:connection-shape-floor}
\end{equation}
This keeps $p_1>6$. Continue the power profile until its amplitude is
moderate, and change $l$ smoothly to $1/2$ so that the constant
amplitude on the next hold is a fixed $e_0$, which may be one.
The scalar entry and exit schedules are fixed; their pressure
increments are included below.

\paragraph{5. Axial steering with the actual moving zero.}
On a fixed finite hold prescribe
\begin{equation}
 E=e_0f,\qquad a=1,\qquad
 U=G_i+r(y)(\eta-z),\quad
 r(0)=0,\quad r(T)=4-m,\quad 0\le r'\le R_0<.05.
 \label{eq:connection-axial-steering}
\end{equation}
One common smooth schedule of sufficiently large finite length $T$
works for $0\le4-m<4$. All its endpoints are flat. Set $V=\eta-z$,
$\overline r_y=r-\overline r$, and
$\overline{r^2}_y=r^2-\overline{r^2}$ with zero initial values.
The exact quadratures are
\begin{equation}
 \begin{aligned}
 \mu&=G_i+\overline r V+e^{-y}\rho_{\mu,0},\\
 \sigma&=G_i^2+2G_iV\overline r+V^2\overline{r^2}
             -e_0^2c_\sigma(y)f^2+e^{-y}\rho_{\sigma,0},\\
 \Pi&=\Pi_{{\rm ref},h}+\gamma\eta+
                e_0^2(c_P+y/2)f^2+\rho_{\Pi,0},\qquad
 (c_\sigma)'=1/2-c_\sigma.
 \end{aligned}
 \label{eq:connection-actual-hold-moments}
\end{equation}
The $\rho$ terms are the actual entry-history errors, made small by
large $C$; they are not set to zero. The linear growth of pressure on
the hold and the $\overline{r^2}$ term are retained.

For increasing bounded $r$ with the indicated derivative bound,
$\overline r\ge c r^2$. Indeed integrate $r$ backwards over length
$\min(1,r/(2R_0))$, on which it is at least half its final value.
Substitution in \eqref{eq:connection-angular-source} shows that the new
positive mean-gradient term is $l\overline r L$. The indefinite bias
cost is at most $C(r+\overline r)|z||\eta|$ plus the already small
core bias. Complete the square using $|z|=O(\gamma/K)$ and the bound
on $\overline r$ to obtain
\begin{equation}
 S_q\ge cK/\Lambda+c\overline r+c\eta^2,\qquad
 Q\ge cK/\Lambda+c\eta^2,\qquad p_1>6.
 \label{eq:connection-hold-Q}
\end{equation}
The coefficient of $\eta^2$ can be fixed independently of large $K$:
the $f$ gradient contributes $2D\eta^2/(1+\eta^2)$, with $D\ge.49$.
Near zero the core increment supplies the positive floor; away from
zero its small remaining error is absorbed into this fixed margin.

Insert all of \eqref{eq:connection-actual-hold-moments} in the actual
$N$ formula. Odd terms from the bounded geometric velocities and the
hold pressure are bounded by $C(1+e_0^2T)|\eta|$; the even axial
parts are $O(\gamma/K+\gamma\delta_{\rm core})$.
Thus, with corresponding fixed derivative bounds,
\begin{equation}
 |N-Z|\le C(1+e_0^2T)|\eta|+C\gamma/K
       +C\gamma\delta_{\rm core}+o_C(1)\gamma.
 \label{eq:connection-actual-N-error}
\end{equation}
The constants here are uniform in large $K$ after
\eqref{eq:connection-core-discrepancy}. Outside a fixed multiple of
$\gamma/K$ the $-\eta B_{m,h}$ term dominates. Inside it
$N_\eta\le-cK$, so $N$ has exactly one root $z_N(y)$, with
\begin{equation}
 |z_N-z|\le C\gamma(1+e_0^2T)/K^2,
 \qquad |z_N|\asymp\gamma/K.
 \label{eq:connection-actual-zero}
\end{equation}
This is a bound for the root of the actual residual, not a substitution
of the frozen reference root. It follows by evaluating
\eqref{eq:connection-actual-N-error} at $z$ and using the mean value
theorem; choose the history errors smaller than $\gamma/K$ first.

Here $b=2r'(\eta-z)/(e_0f)$, and $U_yN$ can be positive only between
$z$ and $z_N$. On that interval
$(U_yN)_+\le CK R_0|z_N-z|^2$, whereas
$Q\ge c\gamma^2/K^2$. Hence
\begin{equation}
 \left(\frac{bN}{EQ}\right)_+
 \le \frac{CR_0(1+e_0^2T)^2}{e_0^2K}<\frac1{10}.
 \label{eq:connection-steering-cone}
\end{equation}
This last smallness is imposed by the earlier choice of large $P_*$.
Also $|b|<1/2$, so $v=1+b^2<2$ and
$P_c=p_1(1-bN/(EQ))>.9p_1>2$. This is the strict relaxed cone of
\cite[Lemma~4.5]{OpenAI2026}. At the end,
$U=4\eta-(4-m)z+O_{C^k}(\delta_{\rm core})$.
Keep $r$ constant for a fixed further interval to recover a fixed
positive source bound while the radial mean relaxes, then return to
$l=.6$. Include these fixed intervals in the constant $T+1$ below.

\paragraph{6. Exact final moments and the pulse cone.}
Define $X_R$ from the actual resumed power coefficient so that
$E=P_*f(X/X_R)^{1/10}$. The normalized end radius of the hold is
$x_{\rm end}\asymp(e_0/P_*)^{10}$, which is small because of $P_*$,
not because $C$ is large. Write hats for normalization of $M,S$ by
$X_R$ and of $I,J$ by $X_R^{3/2}$. The five scaled rows are
\[
 (\widehat M,\widehat I/P_*,\widehat J/P_*,
       \widehat S/P_*^2,C_p/P_*^2).
\]
Restore the velocity trace to the actual prepared reference on a fixed
subinterval of $e^{-8}<X/X_R<e^{-5}$ and use the remaining ordered
bump supports there. Lemma~\ref{lem:prepared-rank}, with $u_c=4\eta$
and $\alpha=1/10$, gives a uniformly invertible scaled moment map.
The total discrepancy entering this map has, for each required fixed
$k$, the bound
\begin{equation}
 \|\mathrm{defect}\|_{C^k}\le C_k\left[
   \gamma/K+\delta_{\rm core}+(e_0/P_*)^{10}
        +e_0^2(T+1)/P_*^2\right]+o_C(1).
 \label{eq:connection-full-moment-defect}
\end{equation}
The first two terms come from the small residual axial bias and core
error; the third bounds the differing inner radial inventories; the
fourth is the retained pressure accumulated during the ramp, hold and
fixed switches. Earlier finite-radius contributions vanish after this
normalization as $C$ grows. In particular the hold pressure is not
incorrectly included in $o_C(1)$.

In these scaled coordinates $E/P_*$ is bounded below, $N/P_*^2$ is
bounded, and an edit of size $\epsilon$ has
$b=O(\epsilon/P_*)$. Thus $bN/E=O(\epsilon)$, $a$ stays near $.8$
and the strict relaxed cone has an input tolerance independent of
large $P_*$. Choose the parameters so that
\eqref{eq:connection-full-moment-defect} lies within the nonlinear
inverse radius. Solving the \emph{full} moment equations gives exact
pressure and all four cumulative matches. Beyond the restoration
interval, \cite[Lemma~4.4]{OpenAI2026} recovers the actual prepared
exterior and its reserved patches and heat collar.

Apply the source radial modulation
\cite[Lemma~4.11 and Appendix~C]{OpenAI2026} to the remaining relaxed
joining region, protected away from the analytic first collar, all
reserved patches and the terminal collar. Its periodic shear has the
required exact mean; choose a sufficiently large finite frequency and
restore its small moment errors as in Lemma~\ref{lem:prepared-rank}.
The result has the strict pulse cone on the full active annulus, with
the unchanged flat-edge weights. The profile conditions of the source
completion are now met. Negative shear, when present, was confined to
the preceding stress-free core; the positive-order inner equations
use regular radial operators, not $a^{-1}$, and are unaffected by this
inner sign change.

\paragraph{7. Noncircular choice of parameters.}
Fix the exterior cutoff shapes and a common steering schedule $T,R_0,e_0$
first. Choose $P_*$ large enough for
\eqref{eq:connection-steering-cone}, the scaled moment tolerance and
pressure dominance, then choose the fixed positive $\lambda$.
Choose a small positive tilt $\gamma$ using its pre-tail edit, then
$\sigma_*$ and the common analytic neighborhood. Next choose
$\eps_-$ and a sufficiently large $\Lambda$ for the core, reference,
shape and discrepancy estimates, including
$\delta_{\rm core}\ll K^{-1}$; this fixes
$\Delta_-=\eps_-K/\Lambda>0$.
Only now choose a nondegenerate compact positive interval of $h$ with
\[
 h_+<\min\{h_{\rm prep},c_hK/\Lambda,\Delta_-/4,1/100\}.
\]
The pressure estimate \eqref{eq:connection-pressure-preparation} is
what permits this order. On this positive compact interval every late
preparation and heat-collar bound is finite. Finally choose common
large $C$, small activation widths, and the finite modulation and
moment-restoration parameters. All choices precede the subsequent
infinite correction and physical singular limit. This proves the
proposition and Theorem~\ref{thm:families}(\ref{item:family-stretching}).
\end{proof}

\subsection{A local lifting of arbitrary small analytic traces}
\label{subsec:app-local-lifting}
We now prove the function-valued assertion in
Theorem~\ref{thm:families}(\ref{item:family-analytic}), including the
finite equalities in its completion. Fix one critical member constructed
in Section~\ref{subsec:app-critical-connection}, with all its finite
choices. Choose bounded complex neighborhoods
$\Om_2\Subset\Om_1\Subset\Om$ of $[-1,1]$ on which its coefficient
functions and the inverses of the prepared moment blocks are holomorphic.
Work on $\Om_1$ and $\Om_2$, while measuring prescribed $g,\psi$ in
$H^\infty(\Om)$. The domains and all constants in this subsection are
fixed before the inputs are varied. The reconstruction argument also
applies to any other fixed member with the same strict finite margins;
criticality and the two constraints on $g$ are used only for the stated
central-kinematics conclusion.

\paragraph{Actual pressure and core.}
First use the three azimuthal bumps of
Lemma~\ref{lem:prepared-rank} to solve
$(\Delta I,\Delta S,\Delta C_p)=(0,0,-\psi)$ on the reserved $U=0$
patch. The exact quadratic moment map, not its linearization alone, is
inverted by \eqref{eq:app-analytic-moment-lift}. This produces an actual
prepared exterior with axis pressure $\Pi_{{\rm ax},0}+\psi$; beyond
the edit it coincides with the previous exterior. Shrinking the input
ball preserves pressure negativity and $|\psi'(0)|<|\gamma|/2$.

Keep $\varphi_*$, $\Lambda$ and $C$ fixed and replace $G_0$ by $G_0+g$
in \eqref{eq:core-normalized-system}. Then
\[
 H_*=H_{*,0}+dg,\qquad
 \chi=\chi_0-dg\zeta/L.
\]
A bounded holomorphic coefficient on $\Om$ embeds continuously in
\eqref{eq:core-coefficient-norm} with a smaller fixed $\varrho$.
Indeed Cauchy's inequality bounds its $\beta$th derivative by
$\beta!r^{-\beta}\|g\|_\Om$, and
$\sup_\beta(\beta+1)^2(\varrho/r)^\beta<\infty$ for
$\varrho<r$. The same argument with one further Cauchy loss controls
$g'$ and $\psi'$. The integrated nonlinear terms are controlled by the
mixed estimate preceding \eqref{eq:core-volterra-inverse}, so no
unbounded parameter-differentiation operator is being inverted in an
ordinary $H^\infty$ norm.

Put $V_0=(1+\mathcal T_0)^{-1}$. For sufficiently small $\|g\|_\Om$,
\[
 \|V_0(\mathcal T-\mathcal T_0)\|<1/2.
\]
The Neumann inverse retains a bound $2\|V_0\|$. All other terms of
\eqref{eq:core-fixed-point} vary continuously in the same integrated
coefficient norms. The strict ball-invariance and contraction margins
therefore persist, giving an actual core with the prescribed traces,
positive $F$ and continuous dependence in each of the finite certificate
norms. This is a nonlinear solve on one radial interval, not an inference
from a formal axis expansion. The positive-radius exit has strict
margins in $p_1$ and $p_1+p_2^2/p_1-2$, and these persist as well.

\paragraph{Flat attachment and exact global matching.}
Replay the fixed reference cutoffs, shear reduction, shape transition
and steering functions with the new core. Their coefficients are
computed from the new actual velocity, pressure and quadratures
\eqref{eq:app-forward}--\eqref{eq:app-N}. These are finite integrations,
products and parameter derivatives with bounded denominators; their
required $C^k$ norms vary continuously after a fixed Cauchy shrink.
At the inner edge use the actual activation formulas
\eqref{eq:connection-flat-activation}. They give
$T_0=e_aB_0$ with the same flat scalar $e_a$ and a continuously varying
nonzero coefficient $B_0$. The limiting direction and its cone gap,
rather than the infimum of the vanishing raw stress, have a uniform
margin. The remaining fixed compact joining intervals have ordinary
strict relaxed-cone margins. All are preserved by reducing the input
radius.

The terminal discrepancy is measured against the \emph{already
pressure-edited} exterior. The five-moment inverse of
Lemma~\ref{lem:prepared-rank} removes its full accumulated defect,
including that due to $g$, exactly. The equalities in
\cite[Lemma~4.4]{OpenAI2026} then recover the actual exterior radial
flow, pressure and residual sources. Neither the prescribed core traces
nor the protected endpoint collars are changed by this solve.

\paragraph{Mean-preserving stress realization.}
There is also an equality in the source radial modulation, not just a
cone inequality. Let $S_0$ be its old input shear and let
$S_L(\theta)$ be its selected periodic loop with mean $S_0$. For the
new actual input $S$, take
\begin{equation}
 \widetilde S_L(\theta)=S_L(\theta)+(S-S_0),\qquad
 \langle\widetilde S_L\rangle=S.
 \label{eq:analytic-mean-preserving-loop}
\end{equation}
The old loop and actual residual have a strict cone margin on a compact
set. A small change of both preserves that margin. Where the old loop
equals its input near a support boundary, the translated loop equals
the new input exactly. Reconstruct its zero-mean primitives with the
new actual $E$. A common sufficiently large \emph{finite} modulation
frequency gives the source approximation bounds on the input ball;
restore the remaining small moment errors with the same five-moment
inverse. Thus the mean-shear equality, pressure and matching equations
are all retained, as required by \cite[Appendix~C]{OpenAI2026}.

There are only finitely many ball-invariance, inverse, positivity,
normalized-edge, joining and moment tolerances before the final
completion. Their minimum, divided by the finite bounds for the input
maps, gives $\eps_F>0$. Each input then has the source completion,
separately. This proves \eqref{eq:functionfamily} and
\eqref{eq:trace-surjectivity}; it does not assert continuity of a chosen
final non-axisymmetric field. Conditions $g(0)=g'(0)=0$ and fixed $F_*$
preserve the central first gradient, not the whole axis gradient.
The exact identities in \eqref{eq:app-slice-observables} follow from
$\delta u_z=\kappa q^{-A}\eta^3$ and
$\delta p_{zz}=[\psi''(0)-4A\psi(0)]\tau^{-2}$ at the stationary
center. The rectangle \eqref{eq:rectangle} lies strictly inside this
function-space ball and retains a positive strain-supply coefficient.

\subsection{Reconstruction of the pressure and offset comparisons}
\label{subsec:app-partition-reconstruction}
For completeness, the two $m=4$ comparisons do not require passage
through a degenerate exit. With $G=4\eta+j_0$, the constant part of the
axis angular source is $3-h+j_0\eta>2.8$ for small $j_0$.
The original even prepared pressure has $Z_*(\eta_H)>0$.
For the pressures \eqref{eq:app-beta}, instead
\[
 Z_{*,\beta}(\eta_H)=-(1+\beta)\calM_H<0.
\]
For any fixed compact $\beta$ interval these values are separated from
zero. The source core and continuation estimates use this separation
through $|Z_*|$ and $p_2^2/p_1$, not through the sign of $Z_*$. The
analytic inverse in \eqref{eq:core-fixed-point} likewise has no such
sign restriction. Reconstruct the reference member and the modified
compact family separately, choosing a common regularizer scale, then
common sufficiently large $\Lambda,C$. Since $H_*$ is identical, this
keeps the same entire prescribed $F_*$ and $G$. The tilts are
$O(j_0(K+\beta_{\max}))$ and hence fit the actual pressure edit for
small $j_0$. The noncritical angular source has a fixed positive lower
bound, so the source continuation and exact five-moment match apply,
with the edited exterior as target. No continuous interpolation through
$Z_*(\eta_H)=0$ has been asserted. This proves
Theorem~\ref{thm:families}(\ref{item:family-partition}).

For the signed-offset comparison, first take the $m=4$ tilted member of
Section~\ref{subsec:app-critical-connection}. At $j=0$, its transport
root has $Z_*(0)=-\gamma\ne0$. Changing $G$ to $4\eta+j$ is a small
analytic input to the same reconstruction just proved, without imposing
the central constraints $g(0)=g'(0)=0$. All finite inverse, exit,
normalized-edge and joining margins remain positive for $|j|\le j_*$
with some $j_*>0$. The moment target uses the same actual prepared
pressure, so it is restored exactly. The root and force computations
\eqref{eq:app-signed-root}--\eqref{eq:app-signed-coefficients} then prove
Theorem~\ref{thm:families}(\ref{item:family-offset}).

For all members, positivity of the swirl in the unchanged inner
stress-free region also gives velocity blowup when the center particle
is stationary: at $\eta=0$, any fixed $0<X<X_a$ has
$r=\sqrt{2X\tau}$ and leading speed $\tau^{-A}E(X,0)$.
Positive-order corrections are of higher relative order there. Thus
$\|u(t)\|_\infty\to\infty$ along these shrinking off-axis points;
nonzero center velocity is not a completion requirement.

\subsection{Viscosity and source dependence}\label{subsec:app-axis-viscosity}
All preceding calculations use $\nu=1$. For $\nu>0$, the viscosity-changing spatial and amplitude rescaling
\begin{equation}
 u_\nu(x,t)=\sqrt\nu\,u(x/\sqrt\nu,t),\qquad
 p_\nu(x,t)=\nu p(x/\sqrt\nu,t),\qquad
 f_\nu(x,t)=\sqrt\nu\,f(x/\sqrt\nu,t)
 \label{eq:app-viscosity-scaling}
\end{equation}
preserves the equation and the qualitative conclusions. The parameter bounds and support sizes change under this scaling, as expected. The complete-family statements are conditional on the external completion theorem and its pressure/moment interface; the axis identities themselves are direct consequences of the displayed equations.

\section{Completion interface and exact matching}\label{app:completion}

This appendix records the interface for the reference completion scheme $\rho_0$ in Section~\ref{subsec:compatibility}. Its pressure formula, angular moments and stress cone belong to the anisotropic representation of \eqref{eq:coordinates}; they are not imposed on every scheme in the co-design framework. The analytic and oscillatory completion is the external input in Assumption~\ref{ass:completion}. Its applicability to the deformed profiles is checked by the moment estimates below, the reconstruction in Appendix~\ref{app:axis}, and the angular-hierarchy extension in Appendix~\ref{app:angular}.

\subsection{The actual profile variables}\label{subsec:app-completion-variables}
With the coordinates in \eqref{eq:coordinates}, define
\begin{equation}
 \begin{aligned}
 M&=\int_0^XU\dd x,& I&=\int_0^X\sqrt{2x}E\dd x,\\
 J&=\int_0^XU\sqrt{2x}E\dd x,& S&=\int_0^X(U^2-E^2/2)\dd x .
 \end{aligned}
 \label{eq:app-moment-def}
\end{equation}
and the canonical pressure
\begin{equation}
 \Pi(X,\eta)=-\int_X^\infty\frac{E(x,\eta)^2}{2x}\dd x .
 \label{eq:app-pressure-def}
\end{equation}
The normalized quantities $\mu=M/X$, $\iota=I/(X\sqrt{2X})$, $\jmath=J/(X\sqrt{2X})$ and $\sigma=S/X$ satisfy the exact forward system
\begin{equation}
 \begin{aligned}
 E_y&=(1-a)E/2,& U_y&=bE/2,\\
 \mu_y&=U-\mu,&\iota_y&=E-3\iota/2,\\
 \jmath_y&=UE-3\jmath/2,&\sigma_y&=U^2-E^2/2-\sigma,
 \qquad\Pi_y=E^2/2.
 \end{aligned}
 \label{eq:app-forward}
\end{equation}
Here $y=\log X$, $a=1-2E_y/E$ and $b=2U_y/E$ are design variables on a finite interval. The radial velocity is not an independent variable:
\begin{equation}
 ru_r^{(0)}=\frac X L(2\eta U-2D\eta\mu-d\mu_\eta).
 \label{eq:app-radial}
\end{equation}

The actual source coefficients are
\begin{align}
 W&=1-2D\eta\mu-d\mu_\eta,\nonumber\\
 Q&=-W+\frac{(1-h)\iota-D\eta\iota_\eta-d\jmath_\eta+2(h-D)\eta\jmath}{E},\label{eq:app-Q}\\
 N&=-WU+D(\mu-\eta\mu_\eta)+4h\eta\sigma-d\sigma_\eta
              +4A\eta\Pi-d\Pi_\eta .\label{eq:app-N}
\end{align}
These formulas show why pressure and parameter derivatives cannot be changed after a velocity profile has been selected. The stress coordinates are
\begin{equation}
 \begin{aligned}
 p_1&=\frac{XQ}{L},& p_2&=\frac{XN}{LE},&
 \vartheta&=-\frac ba,\\
 v&=a(1+\vartheta^2),&
 P_c&=p_1+\vartheta p_2,&
 J_c&=p_2-\vartheta p_1 .
 \end{aligned}
 \label{eq:app-stress-coordinates}
\end{equation}

\subsection{The admissible stress interface}\label{subsec:app-completion-stress}
The reference realization uses a stress-free inner core and an active annulus. On the annulus the sufficient cone conditions are
\begin{equation}
 a>0,\qquad v>2,\qquad P_c>v,\qquad
 (v-2)J_c^2<2(P_c-v)^2 .
 \label{eq:app-cone}
\end{equation}
The actual leading tangential stress is
\begin{equation}
 T_0=\frac E{\sqrt{2X}}\bigl((p_1,p_2)-(a,-b)\bigr).
 \label{eq:app-stress}
\end{equation}
At the inner and outer edges, the stress is multiplied by a prescribed flat factor. Consequently the normalized direction, rather than the vanishing raw stress, is the quantity that must have a positive limiting margin. In the inner core, $T_0=0$ and no division by $a$ is required. This is why a subcritical member may have a short interval of negative core shear while satisfying the annular condition.

We record the following implication of the stated source-dependent completion input; it is not a new scheme-independent completion theorem.

\begin{proposition}[Completion interface]\label{prop:interface}
Under Assumption~\ref{ass:completion}, let a leading profile have a regular analytic axis core, positive $E$ for $X>0$, a flat stress-free inner edge, an active annulus satisfying \eqref{eq:app-cone} with a smooth limiting stress direction, and an exact prepared exterior. Assume also that its ordered correction intervals have the required moment right inverses and that all parameter, higher-order and localization bounds required by that input hold for the actual profile. Then the reference scheme gives
\[
 \mathbf{Comp}_{\rho_0}(\mathbf d)\ne\varnothing.
\]
In particular there is a smooth incompressible Navier--Stokes field satisfying \eqref{eq:NS}--\eqref{eq:target}, with the late axis traces and first gradient preserved as specified in Assumption~\ref{ass:completion}.
\end{proposition}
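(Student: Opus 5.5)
The plan is a direct reduction to Assumption~\ref{ass:completion}. I will not prove new analytic estimates here. The work is to check that each hypothesis listed in Proposition~\ref{prop:interface} matches, item by item, the input format of \cite[Theorem~4.6, Sections~5--10 and Appendices~A--C]{OpenAI2026} as packaged in the assumption. The first step is to identify the objects. The analytic axis core with $T_0=0$ is the stress-free inner region of the source. The active annulus obeying \eqref{eq:app-cone} is the pulse-cone region of \cite[Lemma~4.5]{OpenAI2026}. The exact prepared exterior with its heat collar is the terminal profile required by \cite[Lemma~4.4]{OpenAI2026}. Under these identifications the profile variables \eqref{eq:app-moment-def}--\eqref{eq:app-N} are literally the source's quantities. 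The radial velocity \eqref{eq:app-radial} and the pressure \eqref{eq:app-pressure-def} are determined by the profile rather than supplied separately. So no variable has been reset between stages, and the source's leading residual identity holds for the given $\mathbf d$.

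The second step handles the edges. At the flat inner and outer edges the raw stress vanishes. What the source needs there is a smooth normalized direction with a positive cone margin. The hypothesis ``smooth limiting stress direction'' supplies exactly this. Together with the flat-factor calculation of \cite[Lemma~A.9]{OpenAI2026}, it gives the weighted derivative bounds used in the stress-realization sections.

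The third step handles the correction hierarchy. The ordered correction intervals with moment right inverses (as in \eqref{eq:moment-lifting}), together with the stated parameter, higher-order and localization bounds, are precisely the quantities $\mathfrak F_{\rho_0}=0$, $\mathcal I_{\rho_0}>0$ and $B_{\rho_0,k}<\infty$ in \eqref{eq:designset}. Hence $(\mathbf d,\xi)\in\mathfrak Z_{\rho_0}$ for a suitable realization variable $\xi$.

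With membership in $\mathfrak Z_{\rho_0}$ established, Assumption~\ref{ass:completion} applies and produces a full realization $(u,p,f)$, so $\mathbf{Comp}_{\rho_0}(\mathbf d)\neq\varnothing$. The properties in \eqref{eq:target} follow from the assumption as follows:
\begin{itemize}
\item zero initial data and a compactly supported smooth force extending through $t=1$ come from the flatness clause of the assumption;
\item bounded energy comes from the exact exterior matching;
\item the divergence of $\|u\|_\infty$ comes from the leading swirl $\tau^{-A}E(X,0)$ at shrinking radii, using positivity of $E$, as in Section~\ref{subsec:app-partition-reconstruction}.
\end{itemize}
Preservation of the late axis traces and of the first gradient is again part of the assumption itself.

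I expect the main obstacle to be the interface check. I must make sure that every quantity the source treats as fixed, in particular its canonical pressure normalization and its moment targets, is computed from the actual deformed profile and not from the reference one. Otherwise the source's estimates are not being applied to the object at hand. I would handle this by listing the source hypotheses in order and pointing each one to the corresponding displayed identity of this appendix.
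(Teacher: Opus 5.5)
Your route is essentially the paper's. The paper gives no separate argument: it records Proposition~\ref{prop:interface} as a direct restatement of Assumption~\ref{ass:completion}, whose hypothesis list the proposition spells out and whose conclusion already contains \eqref{eq:NS}--\eqref{eq:target} and the preserved late axis traces and first gradient, so that $\mathbf{Comp}_{\rho_0}(\mathbf d)\neq\varnothing$ holds by definition \eqref{eq:completion-correspondence}. Your itemized derivations of \eqref{eq:target} and the detour through $\mathfrak Z_{\rho_0}$ are unnecessary and partly misattributed: the flatness clause is local to a late axis neighborhood and gives neither $u(\cdot,0)=0$ nor compact support of $f$, profile-level exterior matching does not by itself bound the energy of the completed field, and \eqref{eq:designset} is bookkeeping rather than the form in which the assumption is stated; likewise, the ``main obstacle'' you name (checking the deformed profile against the hypotheses) is assumed in this proposition and carried out elsewhere in the paper, not in its proof.
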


Proposition~\ref{prop:interface} records the implication used from \cite{OpenAI2026}; it does not prove that an arbitrary profile deformation meets its hypotheses. For Appendix~\ref{app:axis}, the reconstruction has to control analytic axis changes on a common complex neighborhood, restore the actual pressure and all accumulated moment defects, and preserve normalized flat-edge margins. The finite choices follow the source-dependent order: outer data and strict margins first, then the analytic core scale, then the physical amplitude and localized corrections. The local moment inverse is combined with the axis-core and global connection arguments in Sections~\ref{subsec:app-core-contraction}--\ref{subsec:app-local-lifting}. For the nonzero angular-budget family, Proposition~\ref{prop:angular-completion} states and proves the change to the source background construction before its downstream completion is invoked.

\subsection{The five-moment correction}\label{subsec:app-completion-moments}
Use $R=\sqrt{2X}$ and the pressure-moment convention
\begin{equation}
 C_p(\eta):=\int_0^\infty \frac{E(R,\eta)^2}{R}\,\dd R
          =-\Pi_{\rm ax}(\eta).
 \label{eq:app-pressure-moment}
\end{equation}
Here $E(R,\eta)$ abbreviates the same profile expressed in $R$ coordinates. Consequently a prescribed axis pressure increment $\psi$ requires $\Delta C_p=-\psi$, with $\Delta$ denoting a finite moment increment. On a prepared interval, let $(\delta E,\delta U)$ be compactly supported changes. The leading moment differential is
\begin{equation}
 \mathcal L_\eta(\delta E,\delta U)=
 \begin{pmatrix}
 \int R^2\delta E\dd R\\
 \int R\delta U\dd R\\
 \int (2E/R)\delta E\dd R\\
 \int R^2(U\delta E+E\delta U)\dd R\\
 \int R(2U\delta U-E\delta E)\dd R
 \end{pmatrix},
 \qquad R=\sqrt{2X}.
 \label{eq:app-five-map}
\end{equation}
The rows represent, in order, the angular momentum, axial mass, $C_p$, mixed angular momentum and energy-type moment. A finite set of compactly supported vector bumps $B_j$ gives a matrix $\mathsf B(\eta)$ with columns $\mathcal L_\eta B_j$. If
\begin{equation}
 \operatorname{rank}\mathsf B(\eta)=5,\qquad
 \lambda_{\min}(\mathsf B\mathsf B^T)\ge c_*>0
 \quad(-1\le\eta\le1),
 \label{eq:app-rank}
\end{equation}
then
\begin{equation}
 \mathsf R(\eta)=\mathsf B(\eta)^T[\mathsf B(\eta)\mathsf B(\eta)^T]^{-1}
 \label{eq:app-right-inverse}
\end{equation}
is a smooth right inverse. The nonlinear moment increment is $\mathsf Bc+\mathcal Q(c,c)$. The target is the full accumulated defect relative to the required terminal moments, not just the defect of the last design step. Thus a pressure-only restoration has target $(0,0,-\psi,0,0)^T$, while a preceding core or transport change can require other nonzero components.

\begin{lemma}[Rank on the actual prepared patches]\label{lem:prepared-rank}
Let $x=X/\varrho$ on a fixed prepared interval, and suppose that
\[
 U=u_c(\eta),\qquad E=e_*f(\eta)x^\alpha,
 \qquad e_*>0,\quad \inf_{[-1,1]}f>0.
\]
If $\alpha\notin\{-\tfrac12,\tfrac12,\tfrac32\}$, two axial and three
azimuthal bumps can be chosen so that the five-moment differential has
rank five. After the natural radial and amplitude normalizations, its
inverse and every fixed parameter derivative are uniformly bounded on
compact families of these data. This applies to the joining patch
$\alpha=1/10$ and to the protected patches
$\alpha=-1/2-\lambda$ with fixed $\lambda>0$.
\end{lemma}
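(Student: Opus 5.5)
The plan is to compute the five-moment differential \eqref{eq:app-five-map} explicitly on the prepared patch and exploit its block structure. Pass to the patch coordinate: on the patch, $R=\sqrt{2X}=\sqrt{2\varrho x}$, $U=u_c(\eta)$ and $E=e_*f(\eta)x^\alpha$. I will take two \emph{axial} bumps $(0,\delta U_k)$, $k=1,2$, and three \emph{azimuthal} bumps $(\delta E_j,0)$, $j=1,2,3$, all supported in a fixed compact subinterval of the patch.

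For an axial bump only rows $2,4,5$ are nonzero:
\[
 \Bigl(0,\ \int R\,\delta U,\ 0,\ \int R^2E\,\delta U,\ 2u_c\int R\,\delta U\Bigr).
\]
For an azimuthal bump the column is
\[
 \Bigl(\int R^2\delta E,\ 0,\ \int (2E/R)\delta E,\ u_c\int R^2\delta E,\ -\int R E\,\delta E\Bigr).
\]
Subtract $u_c$ times row $1$ from row $4$, and $2u_c$ times row $2$ from row $5$. These are row operations by a unipotent matrix with smooth bounded entries in $\eta$, so they do not affect rank or uniform bounds. After them the matrix becomes block triangular. The axial columns contribute the pairing of $\delta U$ against the functions $(R,\ R^2E)$. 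The azimuthal columns contribute the pairing of $\delta E$ against $(R^2,\ 2E/R,\ -RE)$, taken in rows $1,3,5$.

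Next I factor out the $\eta$-dependence. Every $E$ carries the factor $e_*f(\eta)$. After the natural amplitude normalization, rescaling the row for $C_p$ by $(e_*f)^{-1}$, the row for $J$ by $(e_*f)^{-1}$, and the energy row by $(e_*f)^{-1}$ (the last applied after the row subtraction), the reduced test functions become $\eta$-independent powers of $x$. Since $R^2\propto x$ and $R\propto x^{1/2}$, and $dR\propto x^{-1/2}dx$, the azimuthal rows are pairings of $\delta E$ against
\[
 x^{1/2},\qquad x^{\alpha-1},\qquad x^{\alpha},
\]
and the axial rows are pairings of $\delta U$ against $1$ and $x^{\alpha+1/2}$. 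All of these are with respect to $dx$, up to fixed radial constants.

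Bumps can then be chosen to make each block invertible, provided the exponents within each block are distinct. That is a Haar/Chebyshev system argument: distinct real powers of $x$ on an interval in $(0,\infty)$ are linearly independent, so suitable compactly supported bumps give a nonsingular Gram-type pairing matrix. The distinctness conditions are as follows. In the azimuthal block, $\alpha-1\neq1/2$ requires $\alpha\neq3/2$, $\alpha\ne1/2$ is needed, and $\alpha-1\ne\alpha$ holds automatically. In the axial block, $\alpha+1/2\neq0$ requires $\alpha\neq-1/2$. These are exactly the excluded values.

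Because the normalized matrix is independent of $\eta$, it is a fixed invertible matrix. The original $\mathsf B(\eta)$ is that fixed matrix conjugated by smooth diagonal scalings (powers of $e_*f$, bounded below since $\inf f>0$) and by the unipotent row operations. Hence \eqref{eq:app-rank} holds with uniform $c_*$, and all $\eta$-derivatives of $\mathsf R(\eta)$ are bounded by those of $f$ and $u_c$. These bounds are uniform on compact families of $(u_c,f,e_*,\alpha)$ because the bump pairings depend continuously on $\alpha$ away from the excluded set. The two stated applications follow immediately: $\alpha=1/10$ is not excluded, and $\alpha=-1/2-\lambda$ with $\lambda>0$ avoids $-1/2$.

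The main obstacle I expect is bookkeeping the normalizations so that "uniformly bounded after natural radial and amplitude normalizations" is precisely stated. In particular, the $\varrho$-scaling of each row has to be tracked and absorbed into fixed diagonal factors. If the quadratic term $\mathcal Q$ is also needed here, its bounds follow from the same patch scaling.
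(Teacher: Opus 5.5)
Your proposal is correct and follows essentially the same route as the paper: the same row reductions (subtracting $u_c$ times the $I$ row from the $J$ row and $2u_c$ times the $M$ row from the $S$ row), the same normalized weights $(1,x^{\alpha+1/2})$ and $(x^{1/2},x^{\alpha},x^{\alpha-1})$, and the same distinctness conditions giving the excluded set $\{-\tfrac12,\tfrac12,\tfrac32\}$. The only difference is the final nonsingularity step: the paper uses the Chebyshev property (a nonzero combination of $k$ distinct powers has at most $k-1$ positive zeros), so that \emph{any} nonnegative bumps on ordered disjoint supports give a determinant of fixed sign, which is stronger than your linear-independence/Gram-type existence argument and is what lets one fixed bump choice work uniformly when $\alpha$ itself ranges over a compact set; for the fixed exponents $\alpha=1/10$ and $\alpha=-1/2-\lambda$ actually used, your version suffices.
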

\begin{proof}
Use the moment order $(M,I,J,S,C_p)$ temporarily. Divide out the nonzero
powers of $\varrho,e_*,f$ and subtract $u_c$ times the $I$ row from the
$J$ row and $2u_c$ times the $M$ row from the $S$ row. The axial and
azimuthal blocks then have, respectively, the weights
\begin{equation}
 (1,x^{\alpha+1/2}),\qquad
 (x^{1/2},x^\alpha,x^{\alpha-1}).
 \label{eq:prepared-block-weights}
\end{equation}
Choose two, respectively three, ordered disjoint compact subintervals of
the prepared interval. On each choose a nonzero nonnegative smooth bump;
for example the pullback of $\exp[-1/(1-s^2)]\mathbf1_{|s|<1}$.
For distinct exponents $a_1,\ldots,a_k$, the determinant of their response
matrix is the integral, over the ordered product of these supports, of
$\det[x_j^{a_i}]$ times the product of the bumps. This determinant has a
constant nonzero sign: a nonzero linear combination of $k$ distinct powers
has at most $k-1$ positive zeros. The last assertion follows by dividing
by the smallest power, differentiating, and inducting with Rolle's
theorem. Thus both blocks are invertible. This is the power-moment
argument of \cite[Lemma~4.7 and Corollary~A.3]{OpenAI2026}, applied here to
the actual background rather than to unspecified test functions.

For $\alpha=1/10$ the powers are $(0,3/5)$ and
$(1/2,1/10,-9/10)$. For $\alpha=-1/2-\lambda$ they are
$(0,-\lambda)$ and $(1/2,-1/2-\lambda,-3/2-\lambda)$.
The determinants stay away from zero on the fixed compact parameter sets.
The row operations and their inverses have bounded parameter derivatives;
the same is therefore true of the normalized inverse. The physical
unscaled inverse is also finite once $\varrho$ and the amplitudes have
been fixed. No bound uniform as $\lambda\downarrow0$ is asserted.
\end{proof}

The joining construction uses the first patch in
Lemma~\ref{lem:prepared-rank}. Pressure-only editing uses three azimuthal
bumps in a separate $U=0$ power patch of the preparation; the unused gap
between the reserved leading correction intervals in
\cite[Appendix~A.9]{OpenAI2026} can be subdivided for this purpose. The
later $I_{\rm pos}$ and $I_{\rm mean}$ patches are not edited. On a
$U=0$ patch, azimuthal changes leave $M,J$ identically unchanged; the
three remaining rows are exactly the second block of
\eqref{eq:prepared-block-weights}. Hence a sufficiently small pressure
trace change is realized by actual swirl changes satisfying
\eqref{eq:app-pressure-matching}, with the quadratic terms retained.
All supports are fixed before taking parameter neighborhoods. The
five-moment solve after the core connection restores its accumulated
five-component defect relative to this already pressure-edited exterior.

\paragraph{Parameter-uniform moment restoration.}
The same finite correction can accommodate function-valued targets. Suppose that $\mathsf B$, $\mathsf R$ and the coefficients of $\mathcal Q$ extend boundedly and holomorphically to a fixed complex neighborhood $\Om$ of $[-1,1]$, with $\mathsf B\mathsf R=\mathrm{Id}$. A real-axis rank bound alone is not this analytic hypothesis. If $\mathsf B$ is holomorphic near the interval, the nonvanishing real-axis determinant of $\mathsf B\mathsf B^T$ permits a smaller common neighborhood on which \eqref{eq:app-right-inverse} is holomorphic; $T$ here is the algebraic transpose, not conjugate transpose.

Let $M_R$ bound the right inverse and let $C_Q$ be a bilinear bound for $\mathcal Q$ in the corresponding vector-valued $H^\infty(\Om)$ norms. For a sufficiently small holomorphic target $\delta m$, solve
\begin{equation}
 c=\mathsf R\,\delta m-\mathsf R\,\mathcal Q(c,c),
 \qquad \|c\|_\Om\le 2M_R\|\delta m\|_\Om.
 \label{eq:app-analytic-moment-lift}
\end{equation}
Indeed, on the ball of radius $r=2M_R\|\delta m\|_\Om$, the fixed-point map has Lipschitz constant at most $2M_RC_Qr$ and maps the ball into itself when $4M_RC_Qr<1$. Applying $\mathsf B$ to the fixed-point equation gives the exact nonlinear moment equation. Reality on the real interval is preserved by the iteration. For $\Om'\Subset\Om$, Cauchy estimates give, at each fixed order $k$,
\begin{equation}
 \|\partial_\eta^k c\|_{\Om'}
 \le k!\,\operatorname{dist}(\Om',\partial\Om)^{-k}
             \|c\|_\Om.
 \label{eq:app-analytic-parameter-bound}
\end{equation}
For the prepared patches in Lemma~\ref{lem:prepared-rank}, the
coefficient functions are holomorphic. Compactness and nonvanishing of
the determinant on the real interval give one smaller complex
neighborhood on which the square block inverses are holomorphic and
bounded. Thus the analytic hypotheses just used hold for those actual
patches. A finite number of such neighborhoods can be intersected. This
proves the parameter-uniform moment-restoration step. The separate core
and connection arguments are supplied in
Sections~\ref{subsec:app-core-contraction}--\ref{subsec:app-local-lifting};
no continuous selection of the final infinite correction is required.

The same argument allows an actual leading axial velocity in the correction interval. It does not require the convenient special choice $U=0$ or a pure-power rotation profile, provided the full rank and strict cone bounds are verified for the actual background. Thus the five-moment construction is a compatibility tool, not a physical restriction on all designs.

\subsection{Terminal heat profile and canonical pressure}\label{subsec:app-completion-exterior}
The exterior used by the source has a positive heat factor $\mathcal H$ and a power-law asymptotic profile
\begin{equation}
 E_{\rm heat}(X,\eta)=c_\infty X^{-A}\mathcal H(2d/X),\qquad
 \mathcal H(Z)=\frac1{\Gamma(1+h)}\int_0^\infty e^{-s}s^h(1+Zs)^{-h}\dd s .
 \label{eq:app-heat}
\end{equation}
The terminal requirements are not independent assignments: $M(\infty)=S(\infty)=0$, the angular transport equation, and \eqref{eq:app-pressure-def} determine the remaining pressure and moment targets together. A small pressure perturbation in the axis trace is therefore matched by the finite-dimensional change of $E$ described above. After the correction, the exterior heat factor and its flat endpoint are unchanged.

\subsection{Why the interface leaves a broad design space}\label{subsec:app-completion-scope}
The interface constrains actual equations, pressure, moments, edge regularity and the reference realization cone; it does not prescribe a unique profile. Distinct members can lie \emph{within the same} observation fiber. The pressure--viscosity comparison fixes the prescribed axis velocity and first gradient; the analytic family fixes only the central kinematics; the stretching family fixes the central vorticity trace while varying the strain. Their observation maps therefore need to be specified separately.

The resulting non-rigidity is within the reference scheme $\rho_0$. The broader formulation in Section~\ref{subsec:compatibility} allows other schemes without claiming that this cone, moment map or pressure identity completes them. Changing geometry or the realization mechanism requires verification of the corresponding interface rather than automatic reuse of Proposition~\ref{prop:interface}.

\section{Joint angular transport and its free kernel}\label{app:angular}

In the anisotropic representation of Section~\ref{subsec:explicit}, the angular compatibility condition is one coupled equation for a pair of moment functions, not two independent zero-moment requirements. This appendix identifies its complete smooth kernel and right inverse. These are algebraic compatibility results; their use in the completed families of Section~\ref{sec:families} additionally requires realization by actual velocity profiles through Appendix~\ref{app:completion}.

\subsection{All-order angular operator}\label{subsec:app-angular-operator}
Let $I_n$ denote the angular moment at order $n\ge0$ and let $J_n$ denote the complete mixed angular flux at that order; the principal moment uses the renormalization in \eqref{eq:angular-moments}. Set
\begin{equation}
 b_n=1-h+2nh,\qquad c_n=\frac12-2h+2nh,\qquad
 \Ax_n(I,J)=D\eta I'-b_nI+dJ'+2c_n\eta J .
 \label{eq:app-An}
\end{equation}
The full flux $J_n$ includes the contributions of every product $U_iE_j$ with $i+j=n$. The principal equation is $\Ax_0(I_0,J_0)=0$. For $n\ge1$, the source hierarchy used here has the form
\begin{equation}
 \Ax_n(I_n,J_n)=\nu LZ_{c_n}Z_{b_{n-1}}I_{n-1},
 \qquad Z_bf=\frac{2b\eta f+df'}L .
 \label{eq:app-hierarchy}
\end{equation}
Since $LZ_{c_n}=d\partial_\eta+2c_n\eta$, the right side is removed by the shift
\begin{equation}
 \Ax_n\left(I_n,J_n-\nu Z_{b_{n-1}}I_{n-1}\right)=0 .
 \label{eq:app-shift}
\end{equation}
This identity is the all-order form of the first-order angular-flux compensation.

\subsection{An explicit right inverse and the complete kernel}\label{subsec:app-angular-kernel}
Write $k_n=b_n+D=1+c_n$. For $P_n=D\eta I+dJ$, the coefficient relations give the identity
\[
 \Ax_n(I,J)=P_n'-k_n(I-2\eta J).
\]
This identity makes the cancellation in the following right inverse explicit:
\begin{equation}
 \mathcal R_ng=\left(-\frac{dg}{k_nL},\frac{D\eta g}{k_nL}\right),
 \qquad \Ax_n\mathcal R_ng=g .
 \label{eq:app-transport-right-inverse}
\end{equation}
For every smooth $\phi$, define
\begin{equation}
 \Kern_n\phi=\left(\frac{d\phi'+2k_n\eta\phi}{L},
                         \frac{k_n\phi-D\eta\phi'}{L}\right).
 \label{eq:app-transport-kernel}
\end{equation}
The two relations $D\eta I+dJ=k_n\phi$ and $I-2\eta J=\phi'$ have determinant $L=d+2D\eta^2$. Solving them gives \eqref{eq:app-transport-kernel}, and the preceding identity gives $\Ax_n\Kern_n\phi=0$. Conversely, if $\Ax_n(I,J)=0$, the function
\begin{equation}
 \phi=\frac{D\eta I+dJ}{k_n}
 \label{eq:app-kernel-potential}
\end{equation}
recovers $(I,J)=\Kern_n\phi$. Hence all smooth solutions of an inhomogeneous equation are
\begin{equation}
 (I,J)=\mathcal R_ng+\Kern_n\phi,\qquad \phi\in C^\infty([-1,1]) .
 \label{eq:app-all-solutions}
\end{equation}
No endpoint or center resonance is introduced by this joint parameterization: only $L$ and $k_n$ are divided by, and both are bounded away from zero for $0<h<1/100$.

\subsection{A nonzero principal angular budget}\label{subsec:app-angular-budget}
For the principal order, let $k_0=3/2-2h$. One useful kernel element is
\begin{equation}
 I_0=c\left((1-h)^{-1}-4\eta^2\right),\qquad J_0=c\eta .
 \label{eq:app-principal-kernel}
\end{equation}
The first positive order can then carry the required viscous transport by choosing
\begin{equation}
 I_n=0\ (n\ge1),\qquad
 J_1=\nu Z_{1-h}I_0,\qquad J_n=0\ (n\ge2).
 \label{eq:app-first-flux}
\end{equation}
The explicit first flux is
\begin{equation}
 J_1(\eta)=\nu c\,\frac{\eta(-6+8h\eta^2)}{1-2h\eta^2} .
 \label{eq:app-J1}
\end{equation}
These formulas solve the angular hierarchy at the level of moment targets. For the small nonzero-$c$ completed family asserted in Section~\ref{sec:families}, $J_1$ must be the actual mixed flux of the corrected velocity, including all products at that order. The target is realized through the moment corrections in Proposition~\ref{prop:angular-completion}, with the axis data, canonical pressure, cone and remaining exterior conditions retained. Assigning $J_1$ as a free scalar function is not by itself a full-flow construction. Zero angular moments are one compatibility choice, not a universal physical necessity.

\subsection{Realization of the angular targets}\label{subsec:app-angular-realization}
We first derive the cancellation needed by the background construction.
At relative order $q^{2nh}$, define the physical angular inventory and
complete axial flux by
\begin{equation}
 \mathscr L_n=q^{b_n}I_n,\qquad \mathscr J_n=q^{c_n}J_n.
 \label{eq:angular-physical-moments}
\end{equation}
The factor in $\mathscr L_n$ is obtained by integrating
$r^2u_{\theta,n}\,\dd r$; the flux includes all products of total order
$n$, not just $U_0E_n$. The principal inventory is renormalized by the
$z$- and $t$-independent pure-power term in
\eqref{eq:angular-moments}. Conservative angular momentum gives
\begin{equation}
 \partial_t\mathscr L_0+\partial_z\mathscr J_0=0,\qquad
 \partial_t\mathscr L_n+\partial_z\mathscr J_n
       =\nu\partial_{zz}\mathscr L_{n-1}\quad(n\ge1).
 \label{eq:angular-conservative-hierarchy}
\end{equation}
For any exponent $b$, the coordinate identities imply
$\partial_z(q^bf)=q^{b-D}Z_bf$ and
$\partial_t(q^bf)=q^{b-1}(D\eta f'-bf)/L$.
Since $b_{n-1}-D=c_n$, substitution gives exactly
\eqref{eq:app-hierarchy}. In particular the two successive $z$
derivatives use $Z_{c_n}Z_{b_{n-1}}$, not two copies of the same
operator. Equation~\eqref{eq:app-first-flux} solves this hierarchy at
all orders: its only nonzero positive-order flux is the viscous return
of $I_0$.

\begin{proposition}[Completion with a joint angular budget]
\label{prop:angular-completion}
Assume the validity of the source results in
Assumption~\ref{ass:completion}. Suppose an actual leading profile has
the regularity, normalized cone and flat-edge properties, exact terminal
collar, and protected correction patches of the source construction.
Replace only its two zero angular moments by an actual smooth pair
$(I_0,J_0)$ satisfying $\Ax_0(I_0,J_0)=0$, while retaining
$M(\infty)=S(\infty)=0$ and the canonical pressure. Then the target
sequence \eqref{eq:app-first-flux} can be realized by actual higher-order
velocity coefficients. The source background estimates and downstream
completion remain valid. Moreover the pair in
\eqref{eq:app-principal-kernel} is produced by an actual leading-profile
deformation for every sufficiently small $|c|$. Consequently these
nonzero-budget profiles have complete singular realizations with their
original late core traces.
\end{proposition}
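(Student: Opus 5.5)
The argument has three parts: an algebraic check of the target hierarchy, a realization of the higher-order targets by the source's own correction machinery, and a finite moment edit of the leading profile that produces the pair \eqref{eq:app-principal-kernel}.

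\textbf{Step 1: the target sequence solves the hierarchy.} First confirm that \eqref{eq:app-principal-kernel} lies in $\ker\Ax_0$. With $k_0=\tfrac32-2h$ one checks directly that the pair is $\Kern_0\phi$ for the potential $\phi=(D\eta I_0+dJ_0)/k_0$ of \eqref{eq:app-kernel-potential}, up to the displayed normalization. By \eqref{eq:app-all-solutions}, membership in the kernel is exactly what is needed. Next, \eqref{eq:app-shift} with $n=1$ shows that $J_1=\nu Z_{1-h}I_0$, $I_1=0$ solves the first inhomogeneous equation, since $b_0=1-h$. For $n\ge2$ the right side involves $I_{n-1}=0$, so zero targets are admissible. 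The hierarchy \eqref{eq:angular-conservative-hierarchy} is therefore satisfied at all orders. This step is purely algebraic.

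\textbf{Step 2: realizing the higher-order targets.} Here I would reuse the source's higher-order correction scheme. Those corrections are solved order by order with prescribed moment targets on reserved correction intervals. In the source, the targets are zero together with the viscous returns generated from zero inventories. Only the $n=1$ target changes, to $J_1$ in \eqref{eq:app-J1}, which is smooth and bounded on $[-1,1]$ because $L\ge 1-2h>0$.

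I would realize it by adding a compactly supported axial velocity bump $\delta U_1$ in the first-order correction interval, where $E_0>0$. The mixed flux $\int R^2E_0\,\delta U_1\,dR$ then reaches the prescribed value, and the remaining four moment rows are restored by the right inverse of Lemma~\ref{lem:prepared-rank} applied to the first-order patch. Since $J_1=O(|c|)$, the parameter bounds of the source correction estimates are perturbed by $O(|c|)$. The key point is to include all products $U_iE_j$ with $i+j=1$ in the actual flux, and to re-solve the order-one moment equation so that it is exact rather than linearized.

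\textbf{Step 3: producing the leading-order pair by a profile deformation.} To obtain the nonzero pair at leading order, apply the five-moment restoration \eqref{eq:app-analytic-moment-lift} on the protected prepared patch, with target defect $(0,\Delta I_0,\Delta J_0,0,0)=(0,I_0,J_0,0,0)$. The $M$, $S$ and $C_p$ rows are held at zero, which preserves $M(\infty)=S(\infty)=0$ and the canonical pressure exactly. The target is holomorphic in $\eta$ with norm $O(|c|)$, so the fixed-point solve gives an exact nonlinear edit with $\|c_{\rm bump}\|=O(|c|)$.

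Because the supports lie away from the core, the core traces, the axis data and the flat-edge weights are unchanged. The cone margins on the active annulus move by $O(|c|)$, so they survive for small $|c|$. Proposition~\ref{prop:interface} then gives the completion.

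\textbf{Main obstacle.} I expect the main difficulty to be the compatibility of the nonzero principal inventory with the terminal heat collar and the renormalization in \eqref{eq:angular-moments}. The source background estimates assume zero angular moments when they match the heat profile, and a nonzero $I_0$ alters the angular flux entering the exterior. To handle this, I would first use the $z$- and $t$-independence of $H_{\rm pow}$ to show that the renormalized inventory is conserved under \eqref{eq:angular-conservative-hierarchy}. I would then check that the source collar estimates depend on the moments only through this conserved combination and the flux balance, both of which Step 1 guarantees. If that check fails, the fallback is to place the edit inside the heat collar itself and re-verify the exterior matching lemma with the $O(|c|)$ defect.
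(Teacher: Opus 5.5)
Your architecture matches the paper's: a kernel check, a five-moment edit with targets $(0,I_0,J_0,0,0)$, and positive-order targets with $J_1$ carried by axial bumps where $U_0=0$. The gap is that you omit the two verifications where $\Ax_0(I_0,J_0)=0$ is actually used.

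\textbf{First gap: the exterior after the leading edit.} After the edit, the cumulative moments $I,J$ stay shifted by $I_0(\eta),J_0(\eta)$ at every $X$ beyond the patch. That means on the rest of the annulus, its flat outer edge, the terminal collar and the heat exterior. Your claim that the margins move by $O(|c|)$ covers only the compact edit region. Beyond it, an $O(|c|)$ change is not acceptable. At the flat outer edge, an additive non-flat change of $T_0$ destroys the normalized-direction condition. Past that edge the leading stress must vanish identically. A defect in $Q$ there would even give $\delta p_1=X\delta Q/L$ of order $|c|X^{h}$, since $X\sqrt{2X}E\sim X^{1-h}$.

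The paper settles this with an exact identity. Beyond the patch, $U,\mu,\sigma,\Pi$ are unchanged, so $\delta W=\delta N=0$. Then \eqref{eq:app-Q}, together with $2(h-D)=-(1-4h)$, gives $\delta Q=-\Ax_0(I_0,J_0)/(X\sqrt{2X}E)=0$. Your ``main obstacle'' paragraph points toward this through the conserved renormalized inventory, but does not establish it. Your fallback, editing inside the heat collar and re-verifying matching with an $O(|c|)$ defect, would not work: the defect is harmless only when it is exactly zero.

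\textbf{Second gap: the stress primitives at positive orders.} The smallness $J_1=O(|c|)$ is beside the point, since the order-$n$ solves are affine-linear and need no smallness. In \cite[Lemma~5.2, Step~4]{OpenAI2026} the separate zero angular targets cancel the total weighted angular residual integral. That cancellation gives the tangential stress primitive both a forward and a backward representation, and hence makes it vanish beyond $X_+$. With $I_0\neq0$ and $J_1\neq0$ the cancellation has to be re-derived. The paper does this in three parts:
\begin{itemize}
\item it obtains $\int_0^\infty R^2r_{\theta,n}\,\dd R=0$ at every order from \eqref{eq:angular-conservative-hierarchy};
\item for $n\ge2$, it uses that every exterior source contains a positive-order field;
\item at $n=1$, it uses that the only exterior source is the unchanged $-\partial_{zz}u_{\theta,0}$, and that the subtracted principal integral, now nonzero, is compensated by the actual $J_1$.
\end{itemize}
Your Step~1 establishes the hierarchy but never connects it to the stress primitives. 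Without that link, the support and outer flat-weight bounds feeding the source summation and downstream completion are not verified.

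\textbf{Smaller point.} The leading edit must use a $U=0$ patch disjoint from the protected $I_{\rm pos}$ and $I_{\rm mean}$ intervals, not a protected patch. The positive-order solves rely on the exact background $U_0=0$, $E_0=e_*fR^{-1-2\lambda}$ there.
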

\begin{proof}
It is enough to work with viscosity one and apply
\eqref{eq:app-viscosity-scaling} at the end. We give separately the
leading deformation, the positive-order moment solve, and the residual
cancellation on which completion depends.

\emph{Leading deformation.}
Use a compact $U=0$ power patch strictly inside the active annulus and
disjoint from the protected $I_{\rm pos},I_{\rm mean}$ intervals and
both endpoint collars. The finite leading preparation leaves such an
interval available, as in Appendix~\ref{app:completion}. Solve the
nonlinear five-moment increment with targets, in the order
$(M,I,J,S,C_p)$,
\[
 (0,I_0,J_0,0,0).
\]
Lemma~\ref{lem:prepared-rank} and
\eqref{eq:app-analytic-moment-lift} give genuine $(\delta E,\delta U)$
for all sufficiently small $|c|$. The core pressure is unchanged because
the total $C_p$ increment is zero. Beyond the editing interval the local
velocities and the cumulative $M,S$ are unchanged, while $I,J$ are
shifted by their prescribed targets. Using the actual formula
\eqref{eq:app-Q}, the resulting difference is exactly
\begin{equation}
 \delta Q=-\frac{\Ax_0(I_0,J_0)}{X\sqrt{2X}\,E}=0,
 \qquad \delta N=0.
 \label{eq:angular-exterior-cancellation}
\end{equation}
Thus the original terminal collar and stress are recovered, not merely
their limiting values. On the compact editing region, the strict cone
is preserved by smallness in the finite parameter norms used by the
source. The leading core, both flat edges and both protected patches
are untouched. This supplies the actual leading profile in the statement.

\emph{Higher-order velocities.}
At each positive order, use the analytic inner solve of
\cite[Lemma~5.1]{OpenAI2026} with its zero axis values and the already
finalized lower-order coefficients. Its sparse Volterra system supplies
one fixed radial interval; the parameter neighborhood may shrink with
order, exactly as in that lemma. In the global extension of
\cite[Lemma~5.2]{OpenAI2026}, replace the five zero moment targets,
in the source's order, by
\begin{equation}
 m_n=(0,I_n,0,J_n,0).
 \label{eq:angular-high-order-target}
\end{equation}
The current-order equations are affine linear: every nonlinear product
not containing the current unknown is already part of the defect.
On $I_{\rm pos}$ one has
$U_0=0$, $E_0=e_*f(\eta)R^{-1-2\lambda}$.
With the two axial and three azimuthal bumps used there, the exact
coefficient equations are
\begin{equation}
 \begin{aligned}
 B_U\alpha_n&=-d_{U,n}
              +\begin{pmatrix}0\\J_n/(e_*f)\end{pmatrix},\\
 B_E\beta_n&=-d_{E,n}
              +\begin{pmatrix}I_n\\0\\0\end{pmatrix}.
 \end{aligned}
 \label{eq:angular-realized-bumps}
\end{equation}
Here $d_{U,n},d_{E,n}$ are the complete source defects, including all
lower-order products and the radial pressure source. The matrices are
the fixed invertible power-moment blocks of
Lemma~\ref{lem:prepared-rank}, with the source's row normalizations.
These linear solves require no smallness of a target at a given order.
Their constants can depend on $n$ and on each fixed derivative order.
In particular $J_1$ is now a moment of the actual corrected velocities;
its contribution to later pressure and nonlinear defects is retained.

The zero first component in \eqref{eq:angular-high-order-target}
retains compact streamfunctions and radial velocities. The zero third
component retains the normalized pressure support. Thus all
positive-order velocity coefficients and their potentials still vanish
beyond the common $X_+$ and on $I_{\rm mean}$. Their axis traces and
first Cartesian gradients are the original zero traces. The zero fifth
component retains the axial momentum cancellation.

\emph{Stress primitives and completion.}
The only use of separate angular zero targets in
\cite[Lemma~5.2, Step~4]{OpenAI2026} is to cancel the total weighted
angular residual integral. Replace that step by
\eqref{eq:angular-conservative-hierarchy}. At every order it gives
\[
 \int_0^\infty R^2 r_{\theta,n}\,\dd R=0.
\]
The axial residual integral still vanishes by the unchanged first and
fifth moment targets. Consequently both tangential stress primitives
have their forward and backward representations. For $n\ge2$, every
exterior same-order product contains a positive-order field, and the
axial viscosity acts on order $n-1>0$; hence the stress vanishes beyond
the same $X_+$. At $n=1$, the sole exterior source is the original
$-\partial_{zz}u_{\theta,0}$. The unchanged terminal collar therefore
gives exactly the outer flat-weight estimate of
\cite[Equation~(5.22)]{OpenAI2026}. The subtracted principal integral
can be differentiated by the same heat-exterior estimate as in that
proof; its nonzero value is compensated by the actual $J_1$.

These facts are precisely the support and fixed-order bounds used in
\cite[Proposition~5.3, Lemma~5.4 and Proposition~5.5]{OpenAI2026}.
Summation is performed on the potentials before taking curls, preserving
incompressibility. Its $q$ cutoffs need not preserve each formal flux
identity separately: comparison with any sufficiently long finite
truncation gives a residual of arbitrarily high order, as in Lemma~5.4.
This argument allows the fixed-order constants to grow with $n$.
It therefore retains the smooth, flat remainder rather than imposing an
unjustified identity on the cutoff sum.

The resulting background has the same annulus, normalized stress
bounds, exact heat collar, compact potentials and exact velocity on
$I_{\rm mean}$. These are the inputs used by
\cite[Sections~6--10]{OpenAI2026}. The zero moments required there for
\emph{added mean corrections} are unchanged; they are not conditions
that the modified leading angular inventory must vanish. Localization
uses the zero axial moments, which have been retained at every order.
The downstream stress realization and completion therefore apply under
Assumption~\ref{ass:completion}. The core and its blowup are unchanged,
and the force has the same support, regularity and flatness properties.
\end{proof}

\subsection{Use in the global design problem}\label{subsec:app-angular-design}
For a chosen terminal angular potential $\phi_0$, the principal pair is
\begin{equation}
 (I_0,J_0)=\left(\frac{d\phi_0'+2k_0\eta\phi_0}{L},
                  \frac{k_0\phi_0-D\eta\phi_0'}{L}\right).
 \label{eq:app-terminal-kernel}
\end{equation}
The global angular condition is then a single compatibility equation for the actual pair $(I_0,J_0)$, rather than two independent zero targets. The remaining moment conditions, the canonical pressure and the stress cone are still evaluated from the same $E,U$ and their parameter derivatives. This distinction matters: the kernel enlarges the design space, but it does not permit pressure or stress to be assigned independently of the velocity history.

For smooth inputs and fixed $h,n,m$, the explicit formulas give
\begin{equation}
 \|\mathcal R_n g\|_{C^m\times C^m}
       \le C_{h,n,m}\|g\|_{C^m},\qquad
 \|\Kern_n\phi\|_{C^m\times C^m}
       \le C_{h,n,m}\|\phi\|_{C^{m+1}}.
 \label{eq:app-transport-bounds}
\end{equation}
The kernel parameterization differentiates its potential once; a same-order $C^m$ bound for that map is not asserted. No uniform estimates as $h\downarrow0$ or $n\to\infty$ are needed for these identities, nor do they provide all-order completion bounds. Every order used by the completion has to satisfy the corresponding source-dependent estimates.

\subsection{Relation to the two realization strategies}\label{subsec:app-angular-strategies}
A profile-led strategy proposes $(E,U,\Pi)$ and seeks compatible transport and stress realization. A dynamics-led strategy begins with a background and a mechanism producing the needed stress or transport. When both use the same anisotropic representation and moment definitions, the angular operator and the moment map above are common compatibility objects. In another geometry, their formulas must be derived and checked anew.

These are different entry points into the coupled compatibility--realization problem, not classes of solutions for which an inclusion or non-inclusion theorem is asserted. A directly chosen profile can require amplified corrections, and a dynamically generated stress must satisfy the actual pressure and transport balances. The smooth angular kernel is one explicit compatibility freedom within this representation, not an arbitrary family of completed singular cores.

\bibliographystyle{unsrt}
\bibliography{references}

@article{Leray1934,
  author = {Jean Leray},
  title = {Sur le mouvement d\textquotesingle un liquide visqueux emplissant l\textquotesingle espace},
  journal = {Acta Mathematica},
  volume = {63},
  pages = {193--248},
  year = {1934},
}

@article{CKN1982,
  author = {Luis Caffarelli and Robert Kohn and Louis Nirenberg},
  title = {Partial regularity of suitable weak solutions of the {Navier--Stokes} equations},
  journal = {Communications on Pure and Applied Mathematics},
  volume = {35},
  number = {6},
  pages = {771--831},
  year = {1982},
}

@article{ESS2003,
  author = {Luis Escauriaza and Gregory A. Seregin and Vladimir \v{S}ver\'{a}k},
  title = {{$L_{3,\infty}$}-solutions of {Navier--Stokes} equations and backward uniqueness},
  journal = {Russian Mathematical Surveys},
  volume = {58},
  number = {2},
  pages = {211--250},
  year = {2003},
}

@misc{Fefferman2000,
  author = {Charles L. Fefferman},
  title = {Existence and smoothness of the {Navier--Stokes} equation},
  howpublished = {Clay Mathematics Institute Millennium Prize Problem statement},
  year = {2000},
}

@article{DLS2009,
  author = {Camillo De Lellis and L\'{a}szl\'{o} Sz\'{e}kelyh\'{i}di, Jr.},
  title = {The {Euler} equations as a differential inclusion},
  journal = {Annals of Mathematics},
  volume = {170},
  number = {3},
  pages = {1417--1436},
  year = {2009},
}

@article{Daneri2017,
  author = {Sara Daneri and L\'{a}szl\'{o} Sz\'{e}kelyh\'{i}di, Jr.},
  title = {Non-uniqueness and {$h$}-principle for {H\"older}-continuous weak solutions of the {Euler} equations},
  journal = {Archive for Rational Mechanics and Analysis},
  volume = {224},
  pages = {471--514},
  year = {2017},
}

@article{BuckmasterVicol2019,
  author = {Tristan Buckmaster and Vlad Vicol},
  title = {Nonuniqueness of weak solutions to the {Navier--Stokes} equation},
  journal = {Annals of Mathematics},
  volume = {189},
  number = {1},
  pages = {101--144},
  year = {2019},
}

@article{Albritton2022,
  author = {Dallas Albritton and Elia Bru\'{e} and Maria Colombo},
  title = {Non-uniqueness of {Leray} solutions of the forced {Navier--Stokes} equations},
  journal = {Annals of Mathematics},
  volume = {196},
  number = {1},
  pages = {415--455},
  year = {2022},
}

@article{Tao2016,
  author = {Terence Tao},
  title = {Finite time blowup for an averaged three-dimensional {Navier--Stokes} equation},
  journal = {Journal of the American Mathematical Society},
  volume = {29},
  number = {3},
  pages = {601--674},
  year = {2016},
}

@article{Lifschitz1991,
  author = {Alexander Lifschitz and Eliezer Hameiri},
  title = {Local stability conditions in fluid dynamics},
  journal = {Physics of Fluids A: Fluid Dynamics},
  volume = {3},
  number = {11},
  pages = {2644--2651},
  year = {1991},
}

@article{Friedlander1991,
  author = {Susan Friedlander and Misha M. Vishik},
  title = {Instability criteria for the flow of an inviscid incompressible fluid},
  journal = {Physical Review Letters},
  volume = {66},
  number = {17},
  pages = {2204--2206},
  year = {1991},
}

@article{Craik1986,
  author = {A. D. D. Craik and W. O. Criminale},
  title = {Evolution of wavelike disturbances in shear flows: a class of exact solutions of the {Navier--Stokes} equations},
  journal = {Proceedings of the Royal Society of London. Series A},
  volume = {406},
  number = {1830},
  pages = {13--26},
  year = {1986},
}

@article{Singh2017,
  author = {Nishant K. Singh and S. Sridhar},
  title = {Plane shearing waves of arbitrary form: exact solutions of the {Navier--Stokes} equations},
  journal = {The European Physical Journal Plus},
  volume = {132},
  pages = {403},
  year = {2017},
}

@misc{Cordoba2023,
  author = {Diego C\'{o}rdoba and Luis Mart\'{i}nez-Zoroa},
  title = {Blow-up for the incompressible {3D Euler} equations with uniform {$C^{1,1/2-\epsilon}\cap L^2$} force},
  year = {2023},
  note = {Preprint},
}

@article{Cordoba2026,
  author = {Diego C\'{o}rdoba and Luis Mart\'{i}nez-Zoroa and Fan Zheng},
  title = {Finite time blow-up for the hypodissipative {Navier--Stokes} equations with a force in {$L^1_t C^{1,\epsilon}_x\cap L^\infty_tL^2_x$}},
  journal = {Archive for Rational Mechanics and Analysis},
  volume = {250},
  pages = {38},
  year = {2026},
}

@misc{OpenAI2026,
  author = {{OpenAI}},
  title = {Finite time blowup for {Navier--Stokes}},
  year = {2026},
  howpublished = {Public manuscript, released 8 September 2026},
  note = {166-page supplied version; used as a conditional completion input},
  url = {https://cdn.openai.com/pdf/32d9f210-8b73-45e0-91bc-82a30aef8a9a/navier-stokes.pdf},
}

@book{ConstantinFoias1988,
  author = {Peter Constantin and Ciprian Foias},
  title = {{Navier--Stokes} Equations},
  publisher = {University of Chicago Press},
  year = {1988},
}

@article{ConstantinFefferman1993,
  author = {Peter Constantin and Charles Fefferman},
  title = {Direction of vorticity and the problem of global regularity for the {Navier--Stokes} equations},
  journal = {Indiana University Mathematics Journal},
  volume = {42},
  number = {3},
  pages = {775--789},
  year = {1993},
}
\end{document}